\providecommand{\U}[1]{\protect\rule{.1in}{.1in}}
\newtheorem{theorem}{Theorem}
\newtheorem{claim}[theorem]{Claim}
\newtheorem{corollary}[theorem]{Corollary}
\newtheorem{lemma}[theorem]{Lemma}
\newtheorem{problem}[theorem]{Problem}
\newenvironment{proof}[1][Proof]{\noindent\textbf{#1.} }{\ \rule{0.5em}{0.5em}}
\begin{document}
\title{Learning Distributions over Quantum Measurement Outcomes}
\author{Weiyuan Gong \thanks{Corresponding author. Tsinghua University. \ Email: gongwy19@mails.tsinghua.edu.cn} 
\and Scott Aaronson \thanks{University of Texas at Austin.}}
\date{}
\maketitle

\begin{abstract}
\textit{Shadow tomography} for quantum states provides a sample efficient approach for predicting the properties of quantum systems when the properties are restricted to expectation values of $2$-outcome POVMs. However, these shadow tomography procedures yield poor bounds if there are more than 2 outcomes per measurement. In this paper, we consider a general problem of learning properties from unknown quantum states: given an unknown $d$-dimensional quantum state $\rho$ and $M$ unknown quantum measurements $\mathcal{M}_1,...,\mathcal{M}_M$ with $K\geq 2$ outcomes, estimating the probability distribution for applying $\mathcal{M}_i$ on $\rho$ to within total variation distance $\epsilon$. Compared to the special case when $K=2$, we need to learn unknown distributions instead of values. We develop an online shadow tomography procedure that solves this problem with high success probability requiring $\tilde{O}(K\log^2M\log d/\epsilon^4)$ copies of $\rho$. We further prove an information-theoretic lower bound that at least $\Omega(\min\{d^2,K+\log M\}/\epsilon^2)$ copies of $\rho$ are required to solve this problem with high success probability. Our shadow tomography procedure requires sample complexity with only logarithmic dependence on $M$ and $d$ and is sample-optimal for the dependence on $K$.
\end{abstract}


\section{Introduction}\label{sec:intro}

The statistical learning theory problem of extracting information based on empirical observations is of fundamental importance in a number of fields. In quantum physics, a fundamental problem is to obtain the properties of a quantum system based on statistical results from quantum measurements. A general method to obtain the full information of an unknown $d$-dimensional quantum state $\rho$, called \textit{quantum state tomography}, completely recovers the density matrix to within a small error. This task is proved to require $\Omega(d^2)$ copies of $\rho$ from the information-theoretical perspective \cite{Haah2017Sample}. The sample optimal procedures with exactly $O(d^2)$ sample complexity have been recently developed by O’Donnell and
Wright \cite{Odonnel2016Efficient} and Haah et al.\ \cite{Haah2017Sample}. However, these state tomography procedures of $\rho$ have been pushed to the limit of their capabilities after the recent advances in experimental quantum platforms \cite{Preskill2018Quantum}, recalling that the dimension of the quantum state $d=2^n$ increases exponentially in the number of qubits. 

On the other hand, demanding full description of a quantum state may be excessive for concrete quantum problems. Following this conceptual different line of research, the \textit{quantum shadow tomography} problem developed by Aaronson \cite{Aaronson2019Shadow,Aaronson2007Learnability} considers the case when we are given an unknown $d$-dimensional quantum state and $M$ known quantum events $E_1,...,E_M\in\mathbb{C}^{d\times d}$ with $0\preceq E_i\preceq \mathbb{I}$. Each quantum event can be regarded as a two-outcome quantum measurements that outputs $1$ (or ``accept") with probability $\Tr(E_i\rho)$ and outputs $0$ (or ``reject") otherwise. The goal is to estimate each expectation $\mathbb{E}_\rho[E_i]=\Tr(E_i\rho)$ to within additive error $\pm\epsilon$. This shadow tomography problem for two-outcome POVMs is a quantum analogue of the classical \textit{adaptive data analysis} \cite{Dwork2015Preserving}, which can be solved with $\text{poly}(\log M,\log d,1/\epsilon)$ samples \cite{Bassily2021Algorithmic}. By combining a gentle search routine with online learning algorithm, Aaronson et al.\ proved that the shadow tomography problem can be solved using $\tilde{O}(\log^4 M\log d/\epsilon^4)$ copies of $\rho$ \cite{Aaronson2019Shadow,Aaronson2018Online}, where $\tilde{O}$ hides a $\text{poly}(\log\log M,\log\log d,\log(\frac1\epsilon))$ factor. Inspired by the techniques from the field of differential privacy \cite{Dwork2010Boosting}, an alternative sample complexity of $\tilde{O}(\log^2 M\log^2 d/\epsilon^8)$ was obtained by Aaronson and Rothblum \cite{Aaronson2019Gentle}. Recently, B{\u{a}}descu and O'Donnell \cite{Buadescu2021Improved} proved the best known upper bound of sample complexity on this problem as
\begin{align*}
N=\tilde{O}\left(\frac{\log^2 M\log d}{\epsilon^4}\right).
\end{align*}
This complexity was obtained by combining a \textit{quantum threshold search} procedure and an online learning setting.

In quantum mechanics, the prediction of some intriguing properties requires quantum measurements with $K>2$ measurement outcomes. In this case, the measurement $\mathcal{M}$ outputs results $j=1,...,K$ with probability $\mathbb{E}_\rho[E_j]=\Tr(E_j\rho)$, which is expectations of quantum events $E_1,...,E_K$ that satisfies $\sum_{j=1}^KE_j=\mathbb{I}$. Our goal is to approximate the probability distribution over the outcomes of $\mathcal{M}$ within total variation distance $\epsilon$. Recall that $K$ can be close to $d$ in some cases, so it is natural to ask if we can derive an algorithmic upper bound that use fewer copies of $\rho$.

\subsection{Our Results}\label{sec:introResult}
Motivated by the problem above, this paper studies the shadow tomography problem of $K$-outcome quantum measurements, which can be formulated as follows
\begin{problem}\label{prob:KShadTom}
(Shadow Tomography of $K$-outcome Measurements) We consider an unknown $d$-dimensional quantum state, as well as $M$ quantum measurements $\mathcal{M}_1,...,\mathcal{M}_M$, each of which has $K$ results and outputs the $j$-th result with probability $\Tr(E_{i,j}\rho)$ for $i\in[M]$ and $j\in[K]$. We denote $\bm{p}_i$ the probability distribution $(\Tr(E_{i,1}\rho),...,\Tr(E_{i,K}\rho))$ after measurement $\mathcal{M}_i$. Our goal is to output $M$ probability distributions $\bm{b}_1,...,\bm{b}_M$ defined on the $K$-outcomes such that the total variance distance $d_{TV}(\bm{p_i},\bm{b}_i)\leq\epsilon$ with success probability at least $1-\delta$. 
\end{problem}

We remark that the quantum events $E_{i,j}$ for $j\in[K]$ corresponding to the quantum measurement $\mathcal{M}_i$ is defined to satisfy the constraint $0\preceq E_{i,j}\preceq\mathbb{I}$ and $\sum_{j=1}^KE_{i,j}=\mathbb{I}$. The first main result of this paper is to propose an algorithm to solve this shadow tomography problem of $K$-outcome measurements. We prove the following sample-complexity upper bound for our algorithm.
\begin{theorem}\label{thm:MainUpper}
Problem~\ref{prob:KShadTom} (Shadow Tomography of $K$-outcome Measurements) is solvable using
\begin{align*}
N=\tilde{O}\left(\frac{\log(1/\delta)}{\epsilon^4}\cdot K\cdot\log^2M\cdot\log d\right)
\end{align*}
copies of $\rho$. Here, the $\tilde{O}$ hides a $\text{poly}(\log\log M,\log\log D,\log(1/\epsilon),\log K)$ factor. The procedure is fully explicit and online. 
\end{theorem}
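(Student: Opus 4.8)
The plan is to lift the quantum-threshold-search-plus-online-learning template that yields the $\tilde{O}(\log^2 M\log d/\epsilon^4)$ bound of B\u{a}descu and O'Donnell in the two-outcome case to the $K$-outcome setting, arranging for $K$ to enter only as a multiplicative factor in the search step rather than inside any logarithm. Concretely, I would maintain a sequence of hypothesis density matrices $\sigma_1,\sigma_2,\dots$ produced by a matrix-multiplicative-weights (matrix exponentiated gradient) learner, starting from $\sigma_1=\mathbb{I}/d$, and track the potential $S(\rho\|\sigma_t)=\Tr(\rho\log\rho)-\Tr(\rho\log\sigma_t)$, which satisfies $0\le S(\rho\|\sigma_t)\le\log d$ throughout. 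Writing $\bm{b}_i^{(t)}=(\langle E_{i,1},\sigma_t\rangle,\dots,\langle E_{i,K},\sigma_t\rangle)$ for the distribution predicted by $\sigma_t$, the task is to drive every $\bm{b}_i^{(t)}$ to within total variation distance $\epsilon$ of $\bm{p}_i$.

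First I would bound the number of learner updates independently of $K$. Whenever the search step (below) returns a measurement $i$ with $d_{TV}(\bm{p}_i,\bm{b}_i^{(t)})>\epsilon$ together with an estimate $\hat{\bm{p}}_i$ of its true outcome distribution, I would update $\sigma_t\mapsto\sigma_{t+1}$ by an information (Bregman) projection enforcing the $K$ linear constraints $\langle E_{i,j},\sigma_{t+1}\rangle=\hat p_{i,j}$. By the Pythagorean property of relative entropy together with data processing under the measurement channel $\mathcal{M}_i$ and Pinsker's inequality, such a step decreases the potential by
\begin{align*}
S(\rho\|\sigma_t)-S(\rho\|\sigma_{t+1})\;\ge\;D_{\mathrm{KL}}\!\left(\bm{p}_i\,\middle\|\,\bm{b}_i^{(t)}\right)\;\ge\;2\,d_{TV}\!\left(\bm{p}_i,\bm{b}_i^{(t)}\right)^2\;>\;2\epsilon^2,
\end{align*}
up to the estimation error in $\hat{\bm{p}}_i$, which I would keep below $\epsilon/10$. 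Since the potential starts at most $\log d$ and never goes negative, the process halts after $T=O(\log d/\epsilon^2)$ updates --- with no factor of $K$ --- and when no violating measurement is found the current $\sigma_t$ already solves Problem~\ref{prob:KShadTom}.

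The factor $K$ enters through the search subroutine, which is the main obstacle. I need a \emph{distribution threshold search}: given $\sigma_t$ and the $M$ measurements, either certify $d_{TV}(\bm{p}_i,\bm{b}_i^{(t)})\le\epsilon$ for all $i$, or exhibit one index $i$ that is $2\epsilon$-far, using only $\tilde{O}(K\log^2 M/\epsilon^2)$ copies of $\rho$. The naive route --- writing $d_{TV}(\bm{p}_i,\bm{b}_i^{(t)})=\max_{S\subseteq[K]}\big(\langle E_{i,S},\rho\rangle-\langle E_{i,S},\sigma_t\rangle\big)$ with $E_{i,S}=\sum_{j\in S}E_{i,j}$ and running the two-outcome threshold search over all $M\,2^K$ events $\{E_{i,S}\}$ --- costs $O((\log M+K)^2/\epsilon^2)$ and thus produces an unwanted $K^2\log d/\epsilon^4$ term. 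The crux is therefore to test the total-variation functional directly: per measurement, estimating $\bm{p}_i$ to total variation $\epsilon/10$ needs only $O(K/\epsilon^2)$ outcomes, and I would show that the sequential gentle-measurement analysis underlying quantum threshold search extends from scalar thresholds to this vector-valued test, replacing the factor $M$ by $\log^2 M$ while keeping the disturbance to $\rho$ small enough that a single register survives all $M$ comparisons. Establishing this gentleness and union-bound argument for the TV test with only linear overhead in $K$ is the technically delicate part.

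Finally I would assemble the pieces and handle the confidence. Running the search at internal failure probability $\delta/\mathrm{poly}$ and union-bounding over the $T=O(\log d/\epsilon^2)$ iterations contributes the stated $\log(1/\delta)$ factor, while the $\text{poly}(\log\log M,\log\log d,\log(1/\epsilon),\log K)$ overhead of the gentle-search internals is absorbed into $\tilde{O}$. Multiplying the number of updates by the per-search cost gives
\begin{align*}
N \;=\; O\!\left(\frac{\log d}{\epsilon^2}\right)\cdot\tilde{O}\!\left(\frac{K\log^2 M}{\epsilon^2}\right)\cdot\log\frac1\delta \;=\;\tilde{O}\!\left(\frac{\log(1/\delta)}{\epsilon^4}\cdot K\cdot\log^2 M\cdot\log d\right),
\end{align*}
which is the claimed bound; explicitness and the online property follow because every step --- the matrix-exponentiated-gradient update and the sequential search --- is constructive and processes the measurements without revisiting past copies.
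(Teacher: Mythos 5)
Your overall architecture matches the paper's---a distribution threshold search costing $\tilde{O}(K\log^2M/\epsilon^2)$ copies per call, multiplied by $O(\log d/\epsilon^2)$ update rounds---but the step carrying all of the theorem's novelty, the search itself, is not proved in your proposal; it is only announced. You correctly rule out the naive reduction over the $2^K$ subsets of outcomes, but then you say you would ``show that the sequential gentle-measurement analysis underlying quantum threshold search extends from scalar thresholds to this vector-valued test.'' That extension is precisely the open technical problem, and your sketch gives no indication of how the $\chi^2$-stable threshold-reporting argument would accommodate a vector-valued statistic with only linear overhead in $K$. The paper never performs such an extension: it sidesteps it by \emph{blocking}. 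Lemma~\ref{lem:ExpEst} groups $N_0=O(K\log(1/\delta)/\epsilon^2)$ copies into a single register and constructs, for each $i$, one two-outcome event $B_i$ on $(\mathbb{C}^{d\times d})^{\otimes N_0}$---``the empirical distribution of the $N_0$ single-copy outcomes is $7\epsilon/8$-far from $\bm{\theta}_i$''---whose acceptance probability exceeds $1-\delta$ when $d_{TV}(\bm{p}_i,\bm{\theta}_i)>\epsilon$ and is at most $\delta$ when $d_{TV}(\bm{p}_i,\bm{\theta}_i)\leq3\epsilon/4$; the linear $K$-dependence enters only here, through the vector Bernstein bound~\eqref{eq:TotVarDisSampBound}. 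After a Naimark dilation (Theorem~\ref{thm:Naimark}) makes the $B_i$ projectors, the existing two-outcome threshold search (Theorem~\ref{thm:Buadescu2021Improved}) is applied \emph{as a black box} to the unknown state $\rho'=\rho^{\otimes N_0}$, yielding Theorem~\ref{thm:QThresKUpper} with no new gentle-measurement analysis at all. Without this blocking idea (or a completed substitute for the extension you defer), your proof of the stated bound does not go through.

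Your online-learning component, by contrast, is essentially sound and genuinely different from the paper's: you track the relative-entropy potential $S(\rho\|\sigma_t)$ and decrease it by an information projection combined with data processing and Pinsker's inequality, whereas the paper runs RFTL with the von Neumann entropy regularizer (Algorithm~\ref{algo:RFTLKShad}) and converts the regret bound of Theorem~\ref{thm:OnlineRegretUpper} into the $T=O(\log d/\epsilon^2)$ bound of Theorem~\ref{thm:OnlineUpper}. Your route needs one repair: the Pythagorean inequality requires $\rho$ to lie in the set you project onto, but $\rho$ satisfies $\langle E_{i,j},\rho\rangle=p_{i,j}$, not $\hat{p}_{i,j}$, so you must project onto a relaxed convex set (e.g.\ the total-variation ball of radius $\epsilon/10$ around $\hat{\bm{p}}_i$, which contains $\rho$ with high probability) rather than onto the exact affine constraints; with that fix the per-round decrease of order $\epsilon^2$ survives with a smaller constant. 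Since both analyses give $T=O(\log d/\epsilon^2)$, this substitution is harmless, and the real deficit of your proposal remains the unproved search subroutine.
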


We provide an overview of proof for this theorem in \Cref{sec:introTech}. The detailed proof is technically involved and provided in \Cref{sec:thres} and \Cref{sec:shadow}. \Cref{thm:MainUpper} indicates that we can learn the probability distribution of $M$ quantum measurements of $K$ outcomes using sample complexity that depends logarithmically on $M$ and $d$ but linearly on $K$. Considering the parameters $M$, $d$, and $\epsilon$, our algorithm has the same dependence compared with the best known upper bound for $2$-outcome case \cite{Buadescu2021Improved}. The dependence on $K$ is the most important result in this work. Compared to directly regarding each quantum event $E_{i,j}$ as a two-outcome quantum measurement and approximating the expectation $\Tr(E_{i,j}\rho)$ to within additive error $2\epsilon/K$, our algorithm reduces the dependence on $K$ from $\tilde{O}(K^4)$ to $\tilde{O}(K)$. Notice that in some extreme cases $K$ can be as large as $\Theta(d)$, which is exponential in system size $n$, our algorithm reduces the number of copies required to perform the shadow tomography task effectively. Although the complexity of our algorithm still has an $\tilde{O}(K)$ dependence on $K$, we emphasize that this dependence is necessary by the following information-theoretic lower bound on Problem~\ref{prob:KShadTom}:
\begin{theorem}\label{thm:MainLower}
Any strategy for Problem~\ref{prob:KShadTom}---i.e., for estimating all $\bm{p}_i=(\Tr(E_{i,1}\rho),...,\Tr(E_{i,K}\rho))$ of $\mathcal{M}_i$ to within total variation distance $\epsilon$ for all $i\in[M]$, with success probability at least (say) $2/3$---requires at least
\begin{align*}
N\geq\Omega\left(\frac{\min\{d^2,K+\log M\}}{\epsilon^2}\right)
\end{align*}
copies of unknown $d$-dimensional quantum state $\rho$.
\end{theorem}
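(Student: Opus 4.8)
The plan is to prove the two halves of the lower bound separately, since $\min\{d^2, K+\log M\}$ is just the smaller of the bounds coming from two distinct hard instances. The strategy throughout is the standard information-theoretic recipe: construct a family of states (or a single state drawn from a distribution) on which the measurements are chosen so that solving Problem~\ref{prob:KShadTom} to within total variation distance $\epsilon$ forces the learner to distinguish states that are close in trace distance, and then invoke a distinguishability bound (Holevo / Helstrom, or Fano's inequality) to argue that $o(\text{bound})$ copies cannot succeed with probability $2/3$.

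Let me think about this carefully. The bound $\Omega((K+\log M)/\epsilon^2)$ should itself split into an $\Omega(K/\epsilon^2)$ piece and an $\Omega(\log M / \epsilon^2)$ piece.

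For the $\Omega(\log M/\epsilon^2)$ term: this is essentially inherited from the $K=2$ shadow tomography lower bound, since with $M$ two-outcome measurements one already needs $\Omega(\log M/\epsilon^2)$ copies to estimate all $M$ expectation values. I would take $K=2$, embed the known two-outcome lower bound (a packing of $M$ nearly-orthogonal measurement directions forcing one to learn $\log M$ bits of information at resolution $\epsilon$), and conclude that the distribution-learning version is at least as hard since for $K=2$ total variation distance to $\epsilon$ is equivalent to estimating a single expectation to $\pm\epsilon$.

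For the $\Omega(K/\epsilon^2)$ term: this is the genuinely new piece, and I expect it to be the main obstacle. The idea is to reduce to learning a single $K$-outcome distribution from quantum copies, i.e.\ to a quantum state-discrimination problem where the hidden parameter is essentially a point on a $K$-dimensional probability simplex. Concretely I would take a single fixed measurement $\mathcal{M}$ (so the $M$-dependence is irrelevant here) with $K$ outcomes realized as a projective-type measurement onto $K$ orthogonal subspaces, and choose the state $\rho$ to encode a probability vector $\bm{p}$ drawn from a packing of the simplex with pairwise total variation distance $\gtrsim \epsilon$. Learning $\bm{p}_i$ to within TV distance $\epsilon$ then requires identifying which element of the packing was used, and a packing of the $K$-simplex at TV-resolution $\epsilon$ has roughly $2^{\Omega(K)}$ elements for small $\epsilon$, so Fano's inequality together with a bound on the mutual information per copy (at most a constant, or $O(\epsilon^2)$ after rescaling) yields $N = \Omega(K/\epsilon^2)$.

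The $\Omega(d^2/\epsilon^2)$ term is the full-tomography barrier: I would invoke (or adapt) the $\Omega(d^2)$ lower bound of Haah et al.\ \cite{Haah2017Sample}, choosing $M$ and $K$ large enough (of order $d^2$) that the measurements can resolve all $d^2-1$ real parameters of $\rho$, so that solving the problem to accuracy $\epsilon$ recovers $\rho$ to comparable accuracy and the tomography lower bound applies with its $1/\epsilon^2$ dependence. \textbf{The hardest part} will be making the $\Omega(K/\epsilon^2)$ argument fully rigorous: I must exhibit an explicit valid POVM $\{E_j\}$ with $\sum_j E_j = \mathbb{I}$ and a state family whose induced output distributions form a genuine $\epsilon$-separated packing of the simplex of the right cardinality, while simultaneously controlling the per-copy information (the Holevo quantity) so that it scales like $\epsilon^2$ rather than a constant — otherwise I lose the $1/\epsilon^2$ factor. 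Balancing the packing size against the per-copy information, and ensuring the construction respects both the simplex constraint on $\bm{p}$ and the positivity constraints $0 \preceq E_j \preceq \mathbb{I}$, is where the care is required; the remaining terms follow from citing or lightly adapting known lower bounds.
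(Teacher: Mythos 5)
Your overall recipe matches the paper's, and two of your pieces track its proof closely. Padding the known two-outcome lower bound with zero POVM elements to get the $\Omega(\min\{d^2,\log M\}/\epsilon^2)$ part is legitimate, and your $\Omega(K/\epsilon^2)$ instance---one measurement onto $K$ orthogonal subspaces, states realizing a $2^{\Omega(K)}$-point packing of near-uniform distributions, Fano plus a per-copy information bound---is exactly the paper's construction specialized to a single measurement. The step you flag as hardest (forcing the per-copy Holevo quantity down to $O(\epsilon^2)$) is carried out in the paper by perturbing the maximally mixed state by $\pm 50\epsilon/K$ on each pair of subspaces and computing eigenvalues explicitly, giving $S(\rho_i(\bm{z}))\geq\log_2 D-O(\epsilon^2)$ and hence $I(\zeta:i,\bm{z})=O(N\epsilon^2)$; your sketch would go through the same way.

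The genuine gap is your handling of the $\min\{d^2,\cdot\}$ term. Your claim that choosing $M$ and $K$ ``of order $d^2$'' lets the measurements ``resolve all $d^2-1$ real parameters of $\rho$, so that solving the problem to accuracy $\epsilon$ recovers $\rho$ to comparable accuracy'' conflates informational completeness with well-conditioned inversion. For any finite family of POVMs, the only free inequality is $\sum_j\abs{\Tr(E_j\Delta)}\leq\norm{\Delta}_{\Tr}$, which goes the unhelpful direction; inverting an informationally complete family with $\mathrm{poly}(d)$ measurements or outcomes loses polynomial-in-$d$ factors. Concretely, with the $d+1$ mutually unbiased bases, TV error $\epsilon$ per basis pins down $\rho$ only to trace distance $O(d\epsilon)$, so Haah et al.'s bound $\Omega(d^2/\epsilon'^2)$ applied at $\epsilon'=d\epsilon$ yields the vacuous $\Omega(1/\epsilon^2)$; a reduction without loss would require $M$ to be an $\epsilon$-net of all two-outcome measurements, i.e.\ $\log M=\Omega(d^2\log(1/\epsilon))$, which is not available in general. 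Relatedly, because your pieces live on separate hard instances, the most you can conclude is $\Omega(\max\{\min\{K,d\},\min\{\log M,d^2\}\}/\epsilon^2)$ (your $K$-piece itself needs $K\leq d$ to have $K$ orthogonal subspaces), which falls short of the stated bound, e.g.\ when $M=1$ and $d<K\leq d^2$. The paper never invokes tomography: it runs one unified construction inside a subspace of dimension $D=\lfloor\min\{d,\sqrt{K+\log_2 M}\}\rfloor$, using Haar-random subspaces and the Hayden--Leung--Winter concentration bound (Lemma~\ref{lem:Haydenspace}) so that measuring with the ``wrong'' $\mathcal{M}_{i'}$ yields a near-uniform distribution; a single state $\rho_i(\bm{z})$ then jointly encodes the measurement index $i$ (roughly $D^2-K$ bits) and the string $\bm{z}$ ($K/2$ bits), and the same $O(N\epsilon^2)$ mutual-information bound delivers $N=\Omega(D^2/\epsilon^2)$, i.e.\ the $\min$, in one shot.
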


We provide the sketch of proof for this lower bound in \Cref{sec:introTech} and leave the detailed proof in \Cref{sec:lower}. The lower bound is obtained by an information theory argument developed by Flammia et al.\ \cite{Flammia2012Quantum} and refined by further works \cite{Aaronson2019Shadow,Huang2020Predicting,Haah2017Sample}. The proof exploits Holevo's theorem \cite{Holevo1973Bounds} and Fano's inequality \cite{Fano1949Transmission}. Even in the special case where there is only $M=1$ entirely classical measurement and the unknown quantum state is also classical, learning a distribution on $[K]$ to within total variance distance $\epsilon$ still requires $O(K/\epsilon^2)$ samples \cite{Canonne2020Short}. This result can be understood as the information required to approximate the probability distribution scales linearly with dimension of the distribution. By comparing the lower bound in \Cref{thm:MainLower} and the upper bound in \Cref{thm:MainUpper}, we can conclude that our algorithm for shadow tomography of $K$-outcome quantum measurement is optimal concerning the dependence on $K$.

\subsection{Techniques}\label{sec:introTech}
Our shadow tomography procedure involves combining two ideas: solving a quantum distribution threshold search problem using $O(K\log^2M/\epsilon^2)$ in each iteration and performing an online learning procedure that lasts at most $O(\log d/\epsilon^2)$ such iterations. 

The first step in this work concerns a problem we call the \textit{quantum distribution threshold search} problem, which reduces to the quantum \textit{threshold search} problem \cite{Buadescu2021Improved} in the special case when $K=2$. We formulate this problem as below:
\begin{problem}\label{prob:QThresK}
(Quantum Distribution Threshold Search) Suppose we are given
\begin{itemize}
\item Parameters $0<\epsilon,\delta<\frac12$;
\item Unentangled copies of an unknown $d$-dimensional quantum state $\rho$.
\item A list of $M$ $d$-dimensional POVMs $\mathcal{M}_1,...,\mathcal{M}_M$ each of $K$ outcomes corresponding to quantum events $E_{i,j}$, where $i\in[M]$, $j\in[K]$, and $\sum_{j=1}^KE_{i,j}=\mathbb{I}$. We denote $\bm{p}_i=(\Tr(E_{i,1}\rho),...,\\\Tr(E_{i,K}\rho))$ to be the actual distribution over the measurement outcomes of $\mathcal{M}_i$.
\item A list of $M$ threshold vectors $\bm{\theta}_i=(\theta_{i,1},...,\theta_{i,K})$, for $\theta_{i,j}\in[0,1]$ and $\sum_{j=1}^K\theta_{i,j}=1$.
\end{itemize}
the algorithm outputs either:
\begin{itemize}
\item $d_{TV}(\bm{p}_{i^*},\bm{\theta}_{i^*})>3\epsilon/4$ for some particular $i^*$; or
\item $d_{TV}(\bm{p}_{i},\bm{\theta}_{i})\leq\epsilon$ for all $i$.
\end{itemize}
Our goal is to minimize the number of copies required to ensure we output correctly with success probability at least $1-\delta$. 
\end{problem}

This problem for the case of $K=2$ was originally called a gentle search procedure in Ref. \cite{Aaronson2019Shadow}. Later, it was renamed as the quantum threshold problem in Ref. \cite{Buadescu2021Improved} since a \textit{gentle} measurement assumption \cite{Aaronson2019Gentle} is not necessary. It is proved that the quantum threshold problem can be solved using $\tilde{O}(\log^2 M/\epsilon^2)$ copies of $\rho$ with probability at least (say) $3/4$ \cite{Buadescu2021Improved}. In this paper, we provide an algorithm that can solve Problem~\ref{prob:QThresK} for any $K\geq 2$:
\begin{theorem}\label{thm:QThresKUpper}
Problem~\ref{prob:QThresK} (Quantum Distribution Threshold Search) is solvable using
\begin{align*}
N=\tilde{O}\left(\frac{\log(1/\delta)}{\epsilon^2}\cdot K\cdot\log^2M\right)
\end{align*}
copies of $\rho$.
\end{theorem}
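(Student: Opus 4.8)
The plan is to reduce Problem~\ref{prob:QThresK} to the $K=2$ quantum threshold search of B\u{a}descu and O'Donnell, at the cost of a factor $K$ in the number of candidate events. The key observation is that the total variation distance between two distributions on $[K]$ admits the dual characterization $d_{TV}(\bm{p}_i,\bm{\theta}_i) = \max_{S\subseteq[K]} \left(\sum_{j\in S}p_{i,j} - \sum_{j\in S}\theta_{i,j}\right)$, so a large TV distance is witnessed by \emph{some} subset $S$ on which $\bm{p}_i$ exceeds $\bm{\theta}_i$ by more than the threshold. First I would, for each measurement $\mathcal{M}_i$ and each subset $S\subseteq[K]$, define the coarse-grained two-outcome event $E_{i,S}=\sum_{j\in S}E_{i,j}$, which satisfies $0\preceq E_{i,S}\preceq\mathbb{I}$ since $\sum_{j}E_{i,j}=\mathbb{I}$, together with the scalar threshold $\theta_{i,S}=\sum_{j\in S}\theta_{i,j}$. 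Then $\Tr(E_{i,S}\rho)=\sum_{j\in S}p_{i,j}$, and the event $d_{TV}(\bm{p}_i,\bm{\theta}_i)>3\epsilon/4$ is equivalent to $\Tr(E_{i,S}\rho)-\theta_{i,S}>3\epsilon/4$ for some $S$.

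Naively this produces $M\cdot 2^K$ two-outcome threshold instances, and running the $K=2$ search of complexity $\tilde{O}(\log^2(\#\text{events})/\epsilon^2)$ on all of them would give $\tilde{O}(K^2\log^2 M/\epsilon^2)$ — off by a factor $K$. The crucial refinement is that we do \emph{not} need all $2^K$ subsets. By a standard fact, the TV distance is attained on the canonical set $S_i^+=\{j: p_{i,j}>\theta_{i,j}\}$, but since we do not know $\bm{p}_i$ in advance, I would instead note that it suffices to threshold on the $K$ ``threshold'' sets obtained by ranking outcomes, or more robustly, to run the search with the collection of single-outcome events and use the identity that the maximizing $S$ can be found greedily. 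A cleaner route that gives exactly the claimed bound: feed the $MK$ single-outcome events $\{E_{i,j}\}$ into the $K=2$ threshold-search primitive with individually chosen thresholds, detect whether any coordinate deviates significantly, and combine the per-coordinate guarantees via the $\ell_1$/TV relationship to conclude either a global TV violation for some $i^*$ or the all-$i$ guarantee $d_{TV}(\bm{p}_i,\bm{\theta}_i)\leq\epsilon$. Since the $K=2$ primitive has complexity scaling as $\log^2(\#\text{events})=\log^2(MK)=O(\log^2 M)$ (absorbing $\log K$ into the $\tilde{O}$), the total cost is $\tilde{O}(K\log^2 M/\epsilon^2)$, where the explicit factor $K$ enters because we must drive each of the $K$ coordinates to accuracy $\Theta(\epsilon/K)$ so that the aggregated TV error is controlled.

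The main obstacle will be the error-budget bookkeeping in the reduction. Guaranteeing $d_{TV}(\bm{p}_i,\bm{\theta}_i)\leq\epsilon$ in the ``accept'' case requires that \emph{every} relevant coarse-grained event be within tolerance simultaneously, which tightens the per-event additive error and is precisely what forces the $\Theta(\epsilon/K)$ coordinate accuracy and hence the linear-in-$K$ sample count. Conversely, in the ``reject'' case I must ensure that a genuine TV violation exceeding $3\epsilon/4$ is detected by at least one of the thresholded events with the right gap, which requires carefully aligning the $3\epsilon/4$ promise against the $\epsilon$ tolerance so that the completeness and soundness windows of the underlying threshold-search primitive do not overlap. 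I would handle this by setting the two-outcome thresholds with a margin of $\Theta(\epsilon/K)$ and invoking the $K=2$ guarantee as a black box; the remaining work is to verify that the union of per-coordinate conclusions reassembles into the two required TV-level outputs, and that the $\log K$ overhead from the increased event count is genuinely subsumed by the $\tilde{O}$ notation rather than contributing a polynomial factor.
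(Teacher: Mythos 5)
Your reduction does not achieve the claimed bound, and the gap is quantitative as well as structural. In your ``cleaner route'' you feed the $MK$ single-outcome events $\{E_{i,j}\}$ into the two-outcome threshold-search primitive and drive each coordinate to accuracy $\Theta(\epsilon/K)$. But the accuracy parameter enters that primitive \emph{quadratically}: threshold search at gap $\alpha$ costs $\tilde{O}(\log^2(\#\text{events})/\alpha^2)$, so running it at $\alpha=\Theta(\epsilon/K)$ costs $\tilde{O}(K^2\log^2 M/\epsilon^2)$, not $\tilde{O}(K\log^2 M/\epsilon^2)$ --- the factor $K$ you budget for is actually $K^2$. Worse, the reduction is unsound in the ``reject'' direction: if the search flags a coordinate $(i^*,j^*)$ with $\abs{\Tr(E_{i^*,j^*}\rho)-\theta_{i^*,j^*}}>\Theta(\epsilon/K)$, this only certifies $d_{TV}(\bm{p}_{i^*},\bm{\theta}_{i^*})>\Theta(\epsilon/K)$, whereas Problem~\ref{prob:QThresK} requires you to output $i^*$ only when $d_{TV}(\bm{p}_{i^*},\bm{\theta}_{i^*})>3\epsilon/4$. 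Spurious flags cannot simply be skipped: there can be many coordinates across all $M$ measurements deviating at the $\epsilon/K$ scale while every TV distance stays below $3\epsilon/4$, and the search procedure is adaptive and consumes/damages the copies as it runs. Your fallback variants fail too: the full-subset coarse-graining $E_{i,S}=\sum_{j\in S}E_{i,j}$ is sound (TV is indeed the max over subsets) but costs $\tilde{O}(\log^2(M2^K)/\epsilon^2)=\Omega(K^2/\epsilon^2)$, which exceeds the claimed bound whenever $K\gg\log^2 M$; and no fixed family of $\text{poly}(K)$ subsets (ranked, prefix, or otherwise) can replace all $2^K$, since the maximizing set depends on the unknown $\bm{p}_i$ --- e.g.\ alternating deviations $p_{i,j}-\theta_{i,j}=\pm\delta$ make every prefix sum $O(\delta)$ while the TV distance is $K\delta/2$.

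The idea you are missing is the one the paper uses: amplify \emph{before} invoking threshold search, rather than asking threshold search for fine accuracy. Lemma~\ref{lem:ExpEst} builds, for each measurement $\mathcal{M}_i$, a \emph{single} two-outcome event $B_i$ acting on the product state $\rho^{\otimes N_0}$ with $N_0=O(K\log(1/\delta)/\epsilon^2)$: it coherently implements ``measure all $N_0$ registers with $\mathcal{M}_i$ and accept iff the empirical distribution is $\geq 7\epsilon/8$-far from $\bm{\theta}_i$ in TV.'' By the vector Bernstein concentration bound (which is where the single factor of $K$ enters, via Eq.~\eqref{eq:TotVarDisSampBound}), this event satisfies $\mathbb{E}_{\rho^{\otimes N_0}}[B_i]>3/4$ when $d_{TV}(\bm{p}_i,\bm{\theta}_i)>\epsilon$ and $\mathbb{E}_{\rho^{\otimes N_0}}[B_i]<1/4$ when $d_{TV}(\bm{p}_i,\bm{\theta}_i)\leq 3\epsilon/4$. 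One then runs the B\u{a}descu--O'Donnell search (\Cref{thm:Buadescu2021Improved}) on the $M$ events $B_1,\ldots,B_M$ at \emph{constant} gap $(1/4$ vs.\ $3/4)$, over copies of the blown-up state $\rho'=\rho^{\otimes N_0}$ (after a Naimark dilation to make everything projectors). This costs $O(\log^2 M\log(1/\delta))$ copies of $\rho'$, i.e.\ $\tilde{O}(K\log^2 M\log(1/\delta)/\epsilon^2)$ copies of $\rho$: the $1/\epsilon^2$ and $K$ appear only once, inside $N_0$, and never get squared by the search's gap parameter. This ``test on the tensor power, then search at constant gap'' step is exactly what your per-coordinate and per-subset decompositions cannot reproduce.
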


We provide the proof of this theorem in \Cref{sec:thres}. When $K=2$, our upper bound for quantum distribution threshold search problem reduces to the same bound for quantum threshold search problem. We remark that the $K$ dependence in the sample complexity bound we provide in \Cref{thm:MainUpper} directly comes from the $K$ dependence in solving the quantum distribution threshold search problem. 

Given our quantum distribution threshold search algorithm, the second step is to employ a black-box reduction to an online quantum state learning algorithm. At the special case when $K=2$, the bound is obtained by Aaronson et al.\ \cite{Aaronson2018Online}. The formal version of our result in online learning distributions is provided as follows:
\begin{theorem}\label{thm:OnlineUpper}
Let $\rho$ be an unknown $d$-dimensional quantum state, as well as $\mathcal{M}_1,\mathcal{M}_2,...,\mathcal{M}_t,...$ be a sequence of $K$-outcome POVMs each consisting quantum events $E_{t,j}$ for $j\in[K]$. We denote $\bm{p}_t=(\Tr(E_{t,1}\rho),...,\Tr(E_{t,K}\rho))$ to be the actual probability distribution when we apply $\mathcal{M}_t$ on $\rho$. We are provided with a probability distribution $\bm{b}_t$ after each measurement $\mathcal{M}_t$ such that $d_{TV}(\bm{p}_t,\bm{b}_t)\leq\epsilon/4$. There exists a strategy for outputting hypothesis states $\omega_1,\omega_2,...$ such that the probability distribution $\bm{\mu}_t$, which is obtained by applying $\mathcal{M}_t$ on $\omega_t$, deviates more than $3\epsilon/4$ from $\bm{p}_t$ for at most $T=O(\log d/\epsilon^2)$ iterations $t$ (also called a ``bad iteration").
\end{theorem}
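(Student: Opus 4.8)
The plan is to reduce the $K$-outcome online learning task to the two-outcome online state learning result of Aaronson et al.\ \cite{Aaronson2018Online} by exploiting the variational characterization of total variation distance as a maximum over subsets of outcomes. Concretely, I would run a matrix multiplicative weights / matrix exponentiated gradient scheme: initialize $\omega_1=\mathbb{I}/d$ and track the quantum relative entropy potential $\Phi_t=\Tr(\rho\log\rho)-\Tr(\rho\log\omega_t)=S(\rho\|\omega_t)$. At step $t$ I compute the predicted distribution $\bm{\mu}_t$ by applying $\mathcal{M}_t$ to $\omega_t$, observe the advice $\bm{b}_t$, and update $\omega_t$ \emph{only} when $d_{TV}(\bm{\mu}_t,\bm{b}_t)>\epsilon/2$; otherwise I set $\omega_{t+1}=\omega_t$. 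Since $d_{TV}(\bm{p}_t,\bm{b}_t)\le\epsilon/4$ by hypothesis, the triangle inequality guarantees that every bad iteration (where $d_{TV}(\bm{\mu}_t,\bm{p}_t)>3\epsilon/4$) satisfies $d_{TV}(\bm{\mu}_t,\bm{b}_t)>3\epsilon/4-\epsilon/4=\epsilon/2$, so the bad iterations form a subset of the update steps. It therefore suffices to bound the number of updates.

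The crucial step is to collapse each update into a single two-outcome mistake. Writing $d_{TV}(\bm{\mu}_t,\bm{b}_t)=\max_{S\subseteq[K]}\sum_{j\in S}(\mu_{t,j}-b_{t,j})$, let $S_t$ be a maximizing subset and set $F_t=\sum_{j\in S_t}E_{t,j}$. Since each $E_{t,j}\succeq0$ and $\sum_{j=1}^KE_{t,j}=\mathbb{I}$, we have $0\preceq F_t\preceq\mathbb{I}$, so $F_t$ is a legitimate two-outcome event, with $\Tr(F_t\omega_t)=\sum_{j\in S_t}\mu_{t,j}$ and $\Tr(F_t\rho)=\sum_{j\in S_t}p_{t,j}$. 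On an update, $\Tr(F_t\omega_t)-\sum_{j\in S_t}b_{t,j}>\epsilon/2$, while $d_{TV}(\bm{p}_t,\bm{b}_t)\le\epsilon/4$ forces $|\Tr(F_t\rho)-\sum_{j\in S_t}b_{t,j}|\le\epsilon/4$; combining gives $\Tr(F_t\omega_t)-\Tr(F_t\rho)>\epsilon/4$. Thus each update exhibits a two-outcome event on which the current hypothesis overestimates $\rho$ by more than $\epsilon/4$ (the symmetric underestimate case is handled by reversing signs). This is precisely the feedback consumed by two-outcome online state learning, and---this is the key point, and the source of the $K$-independence of the bound---the single event $F_t$ absorbs the entire $K$-dimensional deviation, so no factor of $K$ enters.

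It then remains to verify the mistake bound for the update $\omega_{t+1}\propto\exp(\log\omega_t-\eta F_t)$ (and $+\eta F_t$ in the underestimate case) with $\eta=\Theta(\epsilon)$. Using Golden--Thompson together with the operator inequality $e^{-\eta F_t}\preceq\mathbb{I}-(1-e^{-\eta})F_t$, valid for $0\preceq F_t\preceq\mathbb{I}$, the potential change on an update satisfies $\Phi_{t+1}-\Phi_t\le\eta\Tr(F_t\rho)-(1-e^{-\eta})\Tr(F_t\omega_t)\le-\eta(\epsilon/4)+O(\eta^2)$, which for $\eta\asymp\epsilon$ is at most $-\Omega(\epsilon^2)$. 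Since $\Phi_1=S(\rho\|\mathbb{I}/d)\le\log d$ and $\Phi_t\ge0$ throughout, the number of updates---hence the number of bad iterations---is at most $O(\log d/\epsilon^2)$, as claimed.

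The part I expect to require the most care is this potential-decrease estimate, specifically the fact that the accessible advice $\bm{b}_t$, rather than the inaccessible true distribution $\bm{p}_t$, drives the update, whereas progress is measured against the true relative entropy $S(\rho\|\omega_t)$. The $\epsilon/4$ slack between $\bm{p}_t$ and $\bm{b}_t$ is calibrated exactly so that the genuine gap $\Tr(F_t\omega_t)-\Tr(F_t\rho)$ stays bounded away from zero with the correct sign, which is what keeps the Golden--Thompson bound strictly negative. Pinning down this sign consistency and the constants in the operator inequality is routine, but it is where the argument would break if the slack were mis-tuned.
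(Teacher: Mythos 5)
Your proposal is correct, but it takes a genuinely different route from the paper's proof. The paper runs the Regularized Follow-the-Leader algorithm with von Neumann entropy regularization directly on the $K$-outcome loss of Eq.~\eqref{eq:DefLoss}: it forms a subgradient $\nabla_t=\sum_{j\in S_{t,1}}E_{t,j}-\sum_{j\in S_{t,-1}}E_{t,j}$ with $\norm{\nabla_t}\leq 2$, proves the regret bound $R_T\leq 4\sqrt{2T\log d}$ (\Cref{thm:OnlineRegretUpper}, via Lemma 5.2 of Hazan, a positivity claim for the iterates, Pinsker's inequality, and generalized Cauchy--Schwarz), and then extracts the iteration bound from $\frac{\epsilon}{2}T\leq\frac{\epsilon}{4}T+4\sqrt{2T\log d}$ --- the same triangle-inequality gap you use between bad-round loss ($\geq\epsilon/2$) and the comparator's loss ($\leq\epsilon/4$). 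You instead collapse each offending round into a single two-outcome event $F_t=\sum_{j\in S_t}E_{t,j}$ via the variational characterization of total variation distance, and run a lazy matrix-multiplicative-weights update analyzed with the relative-entropy potential $S(\rho\Vert\omega_t)$: each update certifies $\Tr(F_t\omega_t)-\Tr(F_t\rho)>\epsilon/4$, so Golden--Thompson together with $e^{-\eta F_t}\preceq\mathbb{I}-(1-e^{-\eta})F_t$ gives a potential drop of at least $\eta\epsilon/4-\eta^2/2$ (e.g.\ $\epsilon^2/32$ at $\eta=\epsilon/4$), and $0\leq S(\rho\Vert\omega_t)$ with $S(\rho\Vert\mathbb{I}/d)\leq\log d$ yields $O(\log d/\epsilon^2)$ updates, of which the bad iterations are a subset. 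Your calculation checks out; one small simplification is that choosing $S_t$ to maximize $\sum_{j\in S_t}(\mu_{t,j}-b_{t,j})$ always produces the overestimate direction, so the ``underestimate case'' you set aside never actually arises. As for what each approach buys: the paper's regret formulation is the more general statement (it bounds cumulative loss against an arbitrary fixed comparator for arbitrary loss sequences, matching the template of the prior online-learning literature), while your mistake-bound argument is more elementary and self-contained --- no subgradient bookkeeping in $K-1$ variables, no positivity claim for the RFTL minimizer, no Pinsker step --- and it makes the $K$-independence of the iteration bound transparent, since the entire $K$-dimensional deviation is absorbed into the single event $F_t$.
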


We provide the proof for \Cref{thm:OnlineUpper} in \Cref{sec:shadowOnline} following the template of the \textit{Regularized Follow-the-Leader algorithm} (RFTL; see, for example in Hazan et al.\ \cite{Hazan2016Introduction}). We can then combine our quantum distribution threshold search algorithm with this online setting to prove the sample complexity for our shadow tomography procedure of $K$-outcome quantum measurements. We start with the maximally mixed state  $\mathbb{I}/d$. In each iteration, we first perform the quantum distribution threshold search algorithm to find an $i^*$ such that the total variance distance between $\bm{\mu}_t$ and $\bm{p}_t$ is larger than $3\epsilon/4$. We can then use $\tilde{O}(K/\epsilon^2)$ samples to estimate $\bm{b}_t$ with high success probability and update the hypothesis. As there are at most $O(\log d/\epsilon^2)$ ``bad iterations", we finally reach the complexity bound in \Cref{thm:MainUpper}.

To prove the lower bound, we first fix $M$ different quantum measurements. We then find a set (known as packing net \cite{Haah2017Sample}) of size $2^{K/2}M$ consisting of mixed states $\{\rho_1,...,\rho_N\}$ such that we can use our shadow tomography procedure to distinguish between any pair of states chosen from this packing net, which requires $\log(2^{K/2}M)=\Theta(K+\log M)$ bits of information. We further show using Holevo's theorem \cite{Nielsen2002Quantum} that we can at most obtain $O(\epsilon^2)$ bits of information from any quantum states chosen from this set. Therefore, the sample complexity is bounded below by $\Omega((K+\log M)/\epsilon^2)$ to make it possible to obtain the information.

\subsection{Related Works}\label{sec:introReview}
Here, we compare \Cref{thm:MainUpper} with some related works and show the difference and connection among these results.

\textbf{Shadow tomography of two-outcome quantum measurements.} A first related topic is the shadow tomography of two-outcome quantum measurements \cite{Aaronson2019Shadow,Buadescu2021Improved}. It is proved that only $O(\log^2 M\log d/\epsilon^4)$ copies of $\rho$ can estimate the expectation value of $M$ two-outcome quantum measurement. When $K=2$, the sample complexity in \Cref{thm:MainUpper} reduces to this bound. To extend this result to the case of $K>2$, a straightforward approach is to regard each quantum event as a two-outcome quantum measurement and estimate each expectation value $\Tr(E_{i,j}\rho)$ to within additive error $2\epsilon/K$. However, this approach requires an additional cost of $\tilde{O}(K^4)$ using the state-of-art shadow tomography algorithms. Compared to this direct extension, our approach only requires sample complexity that increases linear with $K$ and is proved to be optimal concerning the dependence on $K$.

\textbf{Quantum function estimation using classical shadow.} Huang et al.\ \cite{Huang2020Predicting} considered the task of estimating quantum functions (or the expectations of quantum operators). For completeness, let us restate that theorem in the language of this paper.
\begin{theorem}\label{thm:HuangShadow}
(Huang, Keung, and Preskill \cite{Huang2020Predicting}) Given $\rho$ an unknown $d$-dimensional state, as well as $M$ quantum operators $O_1,...,O_M$. There exists a strategy that can approximate the expectation value for each operator $\Tr(O_i\rho)$ to within additive error $\epsilon$ with high success probability (say $3/4$) using
\begin{align*}
\tilde{O}\left(\frac{\max_i\norm{O_i}_{\text{shadow}}^2}{\epsilon^2}\cdot\log M\right)
\end{align*}
copies of $\rho$, where the shadow norm $\norm{O}_{\text{shadow}}$ is defined to be
\begin{align*}
\norm{O}_{\text{shadow}}=\max_{\sigma}\left(\mathbb{E}_{U\sim\mathcal{U}}\sum_{b\in\{0,1\}^n}\bra{b}U\sigma U^\dagger\ket{b}\bra{b}U\mathcal{M}^{-1}(O) U^\dagger\ket{b}^2\right)^{1/2}.
\end{align*}
Here, the maximization goes through all $d$-dimensional state $\sigma$ in the Hilbert space.
\end{theorem}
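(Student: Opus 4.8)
The plan is to analyze the \emph{classical shadow} estimator underlying the stated bound \cite{Huang2020Predicting}. First I would set up the single-shot estimator: after applying a random unitary $U \sim \mathcal{U}$ and measuring in the computational basis to obtain outcome $b$, form the snapshot $\hat{\rho} = \mathcal{M}^{-1}(U^\dagger \ket{b}\bra{b} U)$, where $\mathcal{M}(\rho) = \mathbb{E}_{U}\sum_b \bra{b}U\rho U^\dagger\ket{b}\, U^\dagger\ket{b}\bra{b}U$ is the measurement channel induced by $\mathcal{U}$. The key structural fact is that $\mathcal{M}$ is invertible on the Hermitian operators for the informationally complete ensemble $\mathcal{U}$, so $\hat{\rho}$ is an unbiased estimator, $\mathbb{E}[\hat{\rho}] = \rho$, and hence $\hat{o}_i := \Tr(O_i \hat{\rho})$ satisfies $\mathbb{E}[\hat{o}_i] = \Tr(O_i\rho)$ for every $i$.

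The heart of the argument, and the step I expect to be the main obstacle, is the variance bound $\mathrm{Var}[\hat{o}_i] \leq \norm{O_i}_{\text{shadow}}^2$. Here one computes the second moment $\mathbb{E}[\hat{o}_i^2]$ over the joint randomness of $U$ and $b$; using $\Tr(O_i\,\mathcal{M}^{-1}(U^\dagger\ket{b}\bra{b}U)) = \bra{b}U\mathcal{M}^{-1}(O_i)U^\dagger\ket{b}$ and dropping the nonnegative $\mathbb{E}[\hat{o}_i]^2$ term, the variance is dominated by $\mathbb{E}_U \sum_b \bra{b}U\rho U^\dagger\ket{b}\,\bra{b}U \mathcal{M}^{-1}(O_i)U^\dagger\ket{b}^2$. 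Since the true state $\rho$ is unknown, I would replace the $\rho$-dependent weight $\bra{b}U\rho U^\dagger\ket{b}$ by its worst case, which is exactly the maximization over all states $\sigma$ appearing in the definition of $\norm{O_i}_{\text{shadow}}$; this yields a state-independent upper bound. The delicate part is carrying out the Weingarten-type average over $\mathcal{U}$ cleanly and confirming that the leading (divergent-looking) contributions cancel so that the displayed shadow-norm quantity indeed controls the fluctuation.

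Given the per-sample variance bound, I would boost to a high-probability simultaneous estimate via \emph{median of means}. Partition the $N$ snapshots into $R$ groups of size $L$; within each group average the $\hat{o}_i$ to obtain $\bar{o}_i^{(r)}$ with variance at most $\max_i\norm{O_i}_{\text{shadow}}^2/L$. Choosing $L = O(\max_i\norm{O_i}_{\text{shadow}}^2/\epsilon^2)$ makes each group mean within $\epsilon$ of $\Tr(O_i\rho)$ with probability at least $3/4$ by Chebyshev's inequality, and taking the median over the $R$ groups drives the per-observable failure probability down to $e^{-\Omega(R)}$ by a Chernoff bound on the number of ``good'' groups. A union bound over the $M$ observables then requires $R = O(\log M)$, absorbing $\log(1/\delta)$ into the constant, so that all $M$ medians are simultaneously accurate with the desired probability. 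Multiplying the two batch parameters yields the total sample complexity $N = R\cdot L = \tilde{O}\big(\max_i\norm{O_i}_{\text{shadow}}^2 \log M / \epsilon^2\big)$, matching the statement, with the $\tilde{O}$ hiding the $\log(1/\delta)$ and median-of-means constants.
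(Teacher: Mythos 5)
The paper does not prove this theorem at all---it is imported verbatim from Huang--Kueng--Preskill \cite{Huang2020Predicting} as a related-work comparison---and your proposal correctly reconstructs the original argument from that reference: the unbiased snapshot $\hat{\rho}=\mathcal{M}^{-1}(U^\dagger\ket{b}\bra{b}U)$, the self-adjointness identity $\Tr(O_i\hat{\rho})=\bra{b}U\mathcal{M}^{-1}(O_i)U^\dagger\ket{b}$, the variance bound by $\norm{O_i}_{\text{shadow}}^2$, median-of-means amplification, and a union bound over the $M$ observables giving $N=\tilde{O}(\max_i\norm{O_i}_{\text{shadow}}^2\log M/\epsilon^2)$. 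Your one flagged ``obstacle'' is not actually one: for the theorem as stated, no Weingarten-type average needs to be carried out, since the maximization over $\sigma$ in the definition of $\norm{O}_{\text{shadow}}$ makes the second-moment bound essentially definitional ($\mathbb{E}[\hat{o}_i^2]$ is exactly the bracketed quantity evaluated at $\sigma=\rho$); Weingarten calculus is needed only to \emph{evaluate} the shadow norm for concrete ensembles such as global or local Cliffords, which the statement does not require.
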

The algorithm for \Cref{thm:HuangShadow} employs the "classical shadow" and does not require the joint measurement that simultaneously measures states of the form $\rho^{\otimes k}$. However, the norm $\norm{O}_\text{shadow}$ is closely related to the Hilbert-Schmidt norm, which may increase exponentially in the system size $n=\log d$. This algorithm can be efficient if $O_i$'s are local operators acting on the bounded-size subsystem. Our algorithm, however, can provide sample-efficient shadow tomography when the number measurement outcome scales polynomially with system size $n$, regardless of whether the measurement is global. Our algorithm can effectively reduce the number of samples required for a shadow tomography problem for global POVMs.

\textbf{Quantum state tomography. }It is proved that there exists a sample-optimal algorithm that can perform state tomography for an unknown quantum state $\rho$ of rank $r\leq d$ using $O((dr/\epsilon^2)$ copies of $\rho$ \cite{Haah2017Sample}. Although the shadow tomography procedure in this paper does not require full information of $\rho$, the information obtained in this procedure increases linearly with $K$. We can observe the connection between quantum state tomography and our shadow tomography algorithm when $K$ becomes exponentially large and we can obtain enough information to approximate the full description of some particular type of $\rho$. In the extreme case, when we perform a quantum measurement on the computational basis---i.e., there are $d$ possible outcomes corresponding to all possible $n$-bit classical strings $\bm{x}$ chosen from $\{0,1\}^n$. The quantum events corresponding to the $\bm{x}$ are the projectors
\begin{align*}
E_{\bm{x}}=\ket{\bm{x}}\bra{\bm{x}},\forall\bm{x}\in\{0,1\}^n.
\end{align*}
To perform shadow tomography on this measurement, we require $\Omega(d/\epsilon^2)$ copies of $\rho$. By performing this measurement, we can obtain a full description of any pure states. Recall that the rank for pure states is $r=1$ and state tomography for quantum states of rank $r$ requires a sample complexity of $\Omega(dr/\epsilon^2)$. This bound is the same as the sample complexity required for state tomography for pure states. 

\section{Preliminaries}\label{sec:prelim}

\subsection{Classical Probability Theory}\label{sec:prelimProb}
Consider two probability distributions $\mathcal{D}=(p_x)_x$ and $\mathcal{D}'=(q_x)_x$ on $K$-dimensional space, we will use the following three distance measures between them. The \textit{total variance distance} between $\mathcal{D}$ and $\mathcal{D}'$ is defined by
\begin{align*}
d_{TV}(\mathcal{D},\mathcal{D}')=\frac{1}{2}\sum_x\abs{p_x-q_x}.
\end{align*}
We also consider the two distance measures that are commonly used for vectors. The \textit{Euclidean norm} of the distance between the two distributions is defined by
\begin{align*}
\norm{\mathcal{D}-\mathcal{D'}}_2=\left(\sum_x(p_x-q_x)^2\right)^{1/2}.
\end{align*}
The \textit{infinity norm} of $(\mathcal{D}-\mathcal{D}')$ is defined by
\begin{align*}
\norm{\mathcal{D}-\mathcal{D'}}_\infty=\max_x\abs{p_x-q_x}.
\end{align*}

The Euclidean norm and the infinity norm are not commonly used in probability theory. We employ these vector norms as intermediate tools when using the concentration inequalities on random vectors. To connect among these norms, we notice that for any probability distribution $\mathcal{D}$ and $\mathcal{D}'$, the following inequality holds
\begin{align}\label{eq:DisBound}
d_{TV}(\mathcal{D},\mathcal{D}')\leq\frac{\sqrt{K}}{2}\norm{\mathcal{D}-\mathcal{D'}}_2.
\end{align}

\subsection{Concentration of Random Vectors}\label{sec:prelimVecbern}
We will need a concentration lemma for random vectors, which is an extension of the vector Bernstein inequality (Theorem 6) in Ref. \cite{Gross2011Recovering,Kohler2017Sub}.
\begin{lemma}\label{lem:VecBerstein}
Let $\mathbf{x}_1,...,\mathbf{x}_m$ be independent $K$-dimensional vector-valued random variables. We assume that each random vector is zero-mean, uniformly bounded and has bounded variance, i.e., 
\begin{align*}
\mathbb{E}[\mathbf{x}_i]=0\text{ and }\norm{\mathbf{x}_i}_\infty\leq\mu\text{ as well as }\mathbb{E}\left[\norm{\mathbf{x}_i}_2^2\right]\leq\sigma^2 
\end{align*}
for some constants $\mu,\sigma>0$. Suppose that parameters satisfies $0<\epsilon<\sigma^2/\mu$, then we have
\begin{align*}
\Pr\left\{\norm{\frac{1}{m}\sum_{i=1}^m\mathbf{x}_i}_2\geq\epsilon\right\}\leq\exp(-m\cdot\frac{\epsilon^2}{8\sigma^2}+C),
\end{align*}
for some positive constant $C$.
\end{lemma}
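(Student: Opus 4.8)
The plan is to reduce the vector concentration statement to a one-dimensional Bernstein-type estimate. Writing $S_m=\sum_{i=1}^m\mathbf{x}_i$, the event $\norm{\frac1m\sum_i\mathbf{x}_i}_2\ge\epsilon$ is exactly the event $\norm{S_m}_2\ge m\epsilon$, so it suffices to bound $\Pr\{\norm{S_m}_2\ge m\epsilon\}$. I would split the deviation into a mean part and a fluctuation part: first control $\mathbb{E}\norm{S_m}_2$, then show that $\norm{S_m}_2$ concentrates around this mean with a variance-sensitive tail, and finally choose the threshold and absorb all lower-order terms into the constant $C$.

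First I would bound the expected norm. By Jensen's inequality $\mathbb{E}\norm{S_m}_2\le(\mathbb{E}\norm{S_m}_2^2)^{1/2}$, and because the $\mathbf{x}_i$ are independent and mean-zero, all cross terms vanish, giving $\mathbb{E}\norm{S_m}_2^2=\sum_{i=1}^m\mathbb{E}\norm{\mathbf{x}_i}_2^2\le m\sigma^2$. Hence $\mathbb{E}\norm{S_m}_2\le\sqrt{m}\,\sigma$.

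The core step is the concentration of $\norm{S_m}_2$ about its mean. I would run the Doob martingale $Z_k=\mathbb{E}[\norm{S_m}_2\mid\mathbf{x}_1,\dots,\mathbf{x}_k]$, which interpolates from $Z_0=\mathbb{E}\norm{S_m}_2$ to $Z_m=\norm{S_m}_2$. Since $\mathbf{x}\mapsto\norm{\cdot}_2$ is $1$-Lipschitz with gradient a unit vector, resampling a single $\mathbf{x}_k$ perturbs $\norm{S_m}_2$ only through the scalar $\langle\hat S,\mathbf{x}_k\rangle$ for a unit vector $\hat S$; this makes each increment bounded in terms of $\mu$ and, crucially, gives a conditional variance controlled by $\mathbb{E}\norm{\mathbf{x}_k}_2^2\le\sigma^2$ rather than by the worst-case range. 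Applying a scalar Bernstein/Freedman inequality for martingale differences then yields, for a deviation $s\ge0$,
\[
\Pr\{\norm{S_m}_2\ge\sqrt{m}\,\sigma+s\}\le\exp\!\Big(-\frac{s^2}{4m\sigma^2}\Big),
\]
valid in the variance-dominated regime $s\lesssim m\sigma^2/\mu$; the hypothesis $\epsilon<\sigma^2/\mu$ is exactly what places the relevant threshold inside this regime.

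Finally I would set $\sqrt{m}\,\sigma+s=m\epsilon$, i.e.\ $s=m\epsilon-\sqrt{m}\,\sigma$, and expand the exponent. The bound becomes $\exp(-\tfrac{m\epsilon^2}{4\sigma^2}+\tfrac{\sqrt m\,\epsilon}{2\sigma}-\tfrac14)$, and comparing with the target exponent $-\tfrac{m\epsilon^2}{8\sigma^2}$ the residual $-\tfrac{m\epsilon^2}{8\sigma^2}+\tfrac{\sqrt m\,\epsilon}{2\sigma}$ is a downward parabola in $x=\sqrt m$ whose maximum is the absolute constant $\tfrac12$; this gives the claimed form with $C$ an absolute constant (chosen large enough to also make the bound trivial in the degenerate range $s<0$, i.e.\ small $m$). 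The main obstacle is the middle step: making the Doob-martingale increments and especially their conditional variance rigorous for the Euclidean norm (which is non-smooth at the origin) while keeping the variance proxy at $\sigma^2$ and not letting the mixed $\ell_\infty$-boundedness hypothesis $\norm{\mathbf{x}_i}_\infty\le\mu$ introduce a spurious $\sqrt K$ factor into the sub-exponential regime threshold. This is precisely the content of the vector Bernstein inequalities of Gross and of Kohler--Lucchi, and the step where their argument must be invoked or reproduced.
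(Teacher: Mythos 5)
Your proposal is, in effective content, the same as the paper's proof: the paper simply quotes Gross's vector Bernstein inequality in the form $\Pr\{\norm{\sum_{i=1}^m\mathbf{x}_i}_2\geq\sqrt{V}+t\}\leq\exp(-t^2/(4V))$ with $V=\sum_i\mathbb{E}[\norm{\mathbf{x}_i}_2^2]\leq m\sigma^2$, and then performs exactly the closing algebra you describe (substituting the threshold $m\epsilon=\sqrt{V}+t$ and absorbing the cross term into the constant, yielding $C=1/4$). Your opening step ($\mathbb{E}\norm{S_m}_2\leq\sqrt{m}\,\sigma$ by Jensen plus independence) and your Doob-martingale/Freedman sketch amount to opening up that black box rather than citing it; this is a legitimate route (it is essentially how Yurinsky-type vector bounds are proved), but, as you concede yourself, the step carrying all the difficulty---a sub-Gaussian tail with variance proxy $m\sigma^2$ in the right regime---is precisely the content of the inequalities of Gross \cite{Gross2011Recovering} and Kohler--Lucchi \cite{Kohler2017Sub}, so once you invoke them your argument and the paper's coincide. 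One point where you are more careful than the paper: the $\sqrt{K}$ issue you flag is genuine. Gross's inequality carries the regime condition $t\lesssim V/\max_i\norm{\mathbf{x}_i}_2$, and under the lemma's stated hypothesis $\norm{\mathbf{x}_i}_\infty\leq\mu$ one can only conclude $\norm{\mathbf{x}_i}_2\leq\sqrt{K}\mu$, so the assumption $\epsilon<\sigma^2/\mu$ does not by itself place the deviation in the sub-Gaussian regime; the paper's proof silently ignores this (indeed it never uses the hypothesis $\epsilon<\sigma^2/\mu$ at all). In the paper's actual application, $\mathbf{x}_i=\bm{p}-\bm{\hat{p}}_i$ satisfies $\norm{\mathbf{x}_i}_2\leq 2$ regardless of $K$, so the mismatch is harmless there, but your instinct that this is the delicate point is correct.
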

\begin{proof}
Theorem $6$ in Ref. \cite{Gross2011Recovering} indicates that for independent, zero-mean random vectors
\begin{align*}
\Pr\left\{\norm{\sum_{i=1}^m\mathbf{x}_i}\geq t+\sqrt{V}\right\}\leq\exp(-\frac{t^2}{4V}),
\end{align*}
where $V=\sum_{i=1}^m\mathbb{E}\left[\norm{\mathbf{x}_i}^2_2\right]$ is the sum of variances for random vectors. We define $\epsilon=t+\sqrt{V}$ and rewrite the above inequality as 
\begin{align*}
\Pr\left\{\norm{\sum_{i=1}^m\mathbf{x}_i}\geq\epsilon\right\}\leq\exp(-\frac14\left(\frac{\epsilon}{\sqrt{V}}-1\right)^2)\leq\exp(-\frac{\epsilon^2}{8V}+\frac14).
\end{align*}
Since the sum of variance $V$ can be bounded by $m\sigma^2$ according to our assumption, we can finally obtain the following inequality
\begin{align*}
\Pr\left\{\norm{\frac1m\sum_{i=1}^m\mathbf{x}_i}\geq\epsilon\right\}\leq\exp(-\frac14\left(\frac{\epsilon}{\sqrt{V}}-1\right)^2)\leq\exp(-m\cdot\frac{\epsilon^2}{8\sigma^2}+\frac14).
\end{align*}
By choosing the constant $C=\frac14$, we finish the proof for this lemma.
\end{proof}

Now we consider sampling from a probability distribution $\bm{p}=(p_1,...,p_K)$ for $m$ times. For the $i$-th sample where $i\in[m]$, we obtain one sample $\bm{\hat{p}}_i=(\hat{p}_i^1,...,\hat{p}_i^K)$ with only one entry $1$ and the other entries $0$. We set $\mathbf{x}_i=\bm{p}-\bm{\hat{p}}_i$. Then $\mathbf{x}_i$ is centered because $\mathbb{E}(\mathbf{x}_i)=\mathbb{E}[\bm{p}-\bm{\hat{p}}_i]=0$. Each entry of $\mathbf{x}_i$ is bounded below by $1$ and
\begin{align*}
\mathbb{E}[\norm{\mathbf{x}_i}^2]=\sigma^2=1-\sum_{j=1}^K p_j^2<1.
\end{align*}
Therefore, by Lemma~\ref{lem:VecBerstein}, we can guarantee that
\begin{align*}
\Pr\left(\norm{\frac{1}{m}\sum_{i=1}^m\bm{\hat{p}}_i-\bm{p}}_2\geq\epsilon\right)\leq\delta,
\end{align*}
as long as we choose $m\geq O(\log(1/\delta)/\epsilon^2)$. To bound the total variance distance $d_{TV}(\frac{1}{m}\sum_{i=1}^m\bm{\hat{p}}_i,\bm{p})$ between the empirical distribution and the actual distribution, we combine the bound in Eq.~\eqref{eq:DisBound} with Lemma~\ref{lem:VecBerstein} and obtain:
\begin{align*}
\Pr\left(d_{TV}\left(\frac{1}{m}\sum_{i=1}^m\bm{\hat{p}}_i,\bm{p}\right)\geq\epsilon\right)\leq\Pr\left(\norm{\frac{1}{m}\sum_{i=1}^m\bm{\hat{p}}_i-\bm{p}}_2\geq\frac{2\epsilon}{\sqrt{K}}\right)\leq\exp(-m\cdot\frac{4\epsilon^2}{K}+\frac 14).
\end{align*}
Hence, we can bound $d_{TV}(\frac{1}{m}\sum_{i=1}^m\bm{\hat{p}}_i,\bm{p})$ below $\epsilon$ with probability at least $1-\delta$ if 
\begin{align}\label{eq:TotVarDisSampBound}
m\geq O\left(\frac{\log(1/\delta)}{\epsilon^2}\cdot K\right).
\end{align}

\subsection{States, Distance Measure and Measurements in Quantum Information}\label{sec:prelimQbase}
Here, we briefly review some basic notations and concepts in quantum information. More details can be found, for example, in Nielsen and Chuang \cite{Nielsen2002Quantum}. 

A matrix $A\in\mathbb{C}^{d\times d}$ is said to be a \textit{Hermitian} matrix if $A^\dagger=A$, where $A^\dagger$ denotes the conjugate transpose of $A$. We write $A\succeq 0$ to denote that the Hermitian operator $A$ is positive semidefinite. We write $A\succeq B$ to denote $A-B\succeq0$. We use $\mathbb{I}$ for the identity matrix and the dimension can be understood from the context.  

In quantum mechanics, a $d$-dimensional \textit{quantum state} can be written as a matrix $\rho\in\mathbb{C}^{d\times d}$ with $\rho\succeq 0$ and $\Tr(\rho)=1$. If $\rho$ has rank $1$, it is called a \textit{pure state} and can be written as a outer product $\ket{\psi}\bra{\psi}$ of a complex vector $\ket{\psi}$. Equivalently, we can write $\rho$ as a convex combination for outer products of different pure states (without loss of generality, there can be at most $d$ orthogonal pure states):
\begin{align*}
\rho=\sum_{i=1}^dp_i\ket{\psi_i}\bra{\psi_i},
\end{align*}
where $\sum_{i=1}^dp_i=1$ and $p_i\geq 0$ for arbitrary $i\in[d]$. This representation can be interpreted as a probability distribution over each pure state $\ket{\psi_i}$. In the special case when $\rho$ is diagonal, it represents a classical probability distribution over orthogonal computational basis $\ket{1},...,\ket{d}$. The \textit{maximally mixed state} $\mathbb{I}/d$ corresponds to the uniform distribution over $\ket{1},...,\ket{d}$.

A \textit{quantum observable}, or \textit{quantum operator}, is a $d$-dimensional Hermitian matrix $O\in\mathbb{C}^{d\times d}$. A quantum observable is a real-value property of the physical systems. Given a quantum state $\rho$, the \textit{expectation} of $O$ with respect to $\rho$ is defined by
\begin{align*}
\mathbb{E}_\rho[O]=\Tr(O\rho).
\end{align*}
A \textit{quantum event} is a quantum operator that satisfies $0\preceq E\preceq \mathbb{I}$, i.e., a Hermitian operator with eigenvalues chosen from $[0,1]$. The expectation value of a quantum event $\mathbb{E}_\rho[E]$ can be interpreted as a probability assigned by quantum state $\rho$ to $E$. We further call $E$ a \textit{projector} for the special case when $E^2=E$ and all eigenvalues for $E$ are Boolean values $0$ and $1$.

A \textit{quantum measurement} $\mathcal{M}$, or a \textit{positive-operator valued measure (POVM)}, is a sequence $\mathcal{M}=(E_1,...,E_K)$ of quantum events with $\sum_{j=1}^KE_j=\mathbb{I}$. According to the linearity of trace, we can obtain
\begin{align*}
\sum_{j=1}^K\mathbb{E}_\rho[E_j]=\mathbb{E}_\rho\left[\sum_{j=1}^KE_j\right]=\mathbb{E}_\rho[\mathbb{I}]=1.
\end{align*}
Given a quantum state $\rho$, a POVM determines a probability distribution $\mathcal{D}=\{p_j\}_{j}$ on $[K]$ defined by $p_j=\mathbb{E}_{\rho}[E_j]=\Tr(E_j\rho)$. Mathematically, an implementation of $\mathcal{M}$ is a sequence of matrices $M_1,...M_K$ with $M_j^{\dagger}M_j=E_j$ for $j\in[d]$. Conditioned on outcome $j$, the quantum state $\rho$ collapses to the new state $\rho|_{M_j}$ defined by
\begin{align*}
\rho|_{M_j}=\frac{M_j\rho M_j^\dagger}{\mathbb{E}_\rho[M_j^\dagger M_j]}=\frac{M_j\rho M_j^\dagger}{\mathbb{E}_\rho[E_j]}.
\end{align*}
More generally, we can represent a quantum measurement as a quantum operation. A \textit{quantum operation} $S$ is defined by $d$-column matrices $M_1,...,M_K$ with $\sum_{j=1}^KM_j^\dagger M_j\preceq\mathbb{I}$. The result of applying $S$ to $\rho$ without knowing the outcome (which can be regarded as a post-selection) is
\begin{align*}
S(\rho)=\sum_{j=1}^KM_j\rho M_j^\dagger.
\end{align*}

There are a variety of distance measures between two quantum state $\rho,\sigma\in\mathbb{C}^{d\times d}$. In this work, we use the \textit{trace distance} $d_{\Tr}(\rho,\sigma)$ defined by
\begin{align*}
d_{\Tr}(\rho,\sigma)=\frac12\norm{\rho-\sigma}_{\Tr}=\max_{0\preceq E\preceq\mathbb{I}}\abs{\mathbb{E}_\rho[E]-\mathbb{E}_\sigma[E]},
\end{align*}
where $\norm{A}_{\Tr}$ is the trace norm of matrix $A$. The second inequality follows from Helstrom's theorem \cite{Helstrom1969Quantum}. In fact, for any classical distance, there is a corresponding measured quantum distance, which is the maximal classical distance that can be achieved by performing an identical measurement on $\rho$ and $\sigma$. Based on this paradigm, the quantum trace distance can be considered as a measured quantum version of total variation distance. If the trace distance between $\rho$ and $\sigma$ is bounded below by $\epsilon$, the total variation distance between the output distribution of quantum measurement $\mathcal{M}$ on $\rho$ and $\sigma$ is also upper bounded by $\epsilon$. 

\subsection{Online Learning Settings and Regrets}\label{sec:PrelimOnline}
In online learning of quantum states considered in \Cref{thm:OnlineUpper}, we are given a sequence of quantum measurements $\mathcal{M}_1,\mathcal{M}_2,...$ in each iteration $t$. In each iteration, the learner constructs a hypothesis state $\omega_t\in\mathbb{C}^{d\times d}$. Given the quantum measurement $\mathcal{M}_t$, the learner calculates the distribution after applying $\mathcal{M}_t$ on the hypothesis state $\omega_t$ as $\bm{\mu}_t=(\Tr(E_{t,1}\rho),...,\Tr(E_{t,K}\rho))$, which is known as a "prediction". 

The learner then obtains feedback from the measurement $\mathcal{M}$. The simplest feedback can be a random variable $Y_t$ chosen from value $[K]=\{1,...,K\}$ for different outcomes. In this paper, the learner obtains a feedback by performing a quantum distribution threshold search to find whether $d_{TV}(\bm{\mu}_{t},\bm{p}_{t})$ is larger than some tolerance threshold, where $\bm{p}_t=(\Tr(E_{t,1}\rho),...,\Tr(E_{t,K}\rho))$ is the actual probability distribution for the unknown state $\rho$. 

If the quantum distribution threshold search procedure does not output $t$, the learner accepts the prediction and set it as the final result. If the quantum distribution threshold search procedure outputs $t$, the learner starts an update procedure. The learner first estimates a probability distribution $\bm{b}_t$. According to Eq.~\eqref{eq:TotVarDisSampBound}, the learner can guarantee that $d_{TV}(\bm{b}_t,\bm{p}_t)\leq\epsilon/4$ with high probability by using $O(K/\epsilon^2)$ copies of $\rho$. Then, the learner defines a loss function that measures the total variance distance between the ``bad prediction" $\bm{\mu}_t$ and $\bm{b}_t$ as:
\begin{align}\label{eq:DefLoss}
\ell_t(\bm{\mu}_t):=\frac12\sum_{j=1}^K\abs{\Tr(E_{t,j}\omega_t)-b_{t,j}},
\end{align}
where $b_{t,j}$ denotes the $j$-th entry of $\bm{b}_t$. The learner updates the hypothesis $\omega_t\to\omega_{t+1}$ based on the loss, measurements, and feedback before the current iteration. 

Our goal is to design a strategy such that the learner’s total loss is minimized. Suppose there are in total $T$ iterations, we want to find a strategy such that the learner's total loss is not much more than that of the strategy which outputs the same quantum hypothesis $\varphi$ in each iteration, where $\varphi$ is chosen as the minimization of the total loss \textit{with perfect hindsight.} Formally, we define the \textit{regret} $R_T$ to be the difference between values of total loss for these two strategies as 
\begin{align}\label{eq:DefRegret}
R_T:=\sum_{t=1}^T\ell_t(\bm{\mu}_t)-\min_{\varphi\in\mathbb{C}^{d\times d}}\sum_{t=1}^T\ell_t(\bm{\mu}_{\varphi}),
\end{align}
where $\bm{\mu}_{\varphi}=(\Tr(E_{t,1}\varphi),...,\Tr(E_{t,K}\varphi))$ is the probability distribution after applying $\mathcal{M}_t$ on $\varphi$. We remark that the sequence of measurements $\mathcal{M}_t$ can be arbitrary, even adversarial, based on the learner's prior actions. 

\section{Quantum Distribution Threshold Search}\label{sec:thres}

In this section, we prove \Cref{thm:QThresKUpper}. In \Cref{sec:shadow}, we will use this procedure as a feedback in the online learning procedure of our shadow tomography algorithm of $K$-outcome POVMs. 

\subsection{Expectation Estimation}\label{sec:thresEst}
Our starting point is the following expectation estimation lemma.
\begin{lemma}\label{lem:ExpEst}
Let $\rho$ be an unknown $d$-dimensional state, and let $\mathcal{M}$ be a $K$-outcome POVM that outputs a probability distribution $\bm{p}=(\Tr(E_1\rho),...,\Tr(E_K\rho))$. We choose parameters $0<\epsilon,\delta<\frac12$. Then there exists $N=K\log(1/\delta)/\epsilon^2$ such that, for any $d$-dimensional quantum states $\rho$,
\begin{align*}
\Pr\left(d_{TV}(\bm{p},\bm{p}')\geq\frac{\epsilon}{8}\right)\leq\delta,
\end{align*}
where $\bm{p}'=(p_1',...,p_K')$ is the empirical distribution by applying $\mathcal{M}$ to the joint state $\rho^{\otimes N}$

Moreover, there exists a quantum event $B$ such that for any $K$-dimensional distribution $\bm{\tau}$
\begin{align*}
d_{TV}(\bm{p},\bm{\tau})&> \epsilon\Rightarrow\mathbb{E}_{\rho^{\otimes N}}[B]> 1-\delta,\\
d_{TV}(\bm{p},\bm{\tau})&\leq \frac{3\epsilon}{4}\Rightarrow\mathbb{E}_{\rho^{\otimes N}}[B]\leq \delta.
\end{align*}
\end{lemma}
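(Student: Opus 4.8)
The plan is to treat the two assertions separately: the first is an immediate specialization of the sampling analysis already carried out in \Cref{sec:prelimVecbern}, and the second is a thresholding construction layered on top of it.

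For the first assertion, I would instantiate the empirical-distribution bound \eqref{eq:TotVarDisSampBound} at target accuracy $\epsilon/8$ rather than $\epsilon$. Measuring $\mathcal{M}$ independently on each of the $N$ copies of $\rho$ produces $N$ i.i.d.\ single-outcome indicator vectors $\bm{\hat{p}}_i$ whose average is exactly the empirical distribution $\bm{p}'$ and whose common expectation is $\bm{p}$, so the analysis leading to \eqref{eq:TotVarDisSampBound} (via \Cref{lem:VecBerstein}) applies verbatim. It yields $\Pr(d_{TV}(\bm{p},\bm{p}')\geq\epsilon/8)\leq\delta$ as soon as $N=\Theta(K\log(1/\delta)/\epsilon^2)$; the factor $64$ incurred by $\epsilon\mapsto\epsilon/8$ together with the additive constant $C$ of \Cref{lem:VecBerstein} is absorbed into the implicit constant in $N=K\log(1/\delta)/\epsilon^2$.

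For the second assertion I would define $B$ explicitly as the ``accept'' outcome of a joint measurement on $\rho^{\otimes N}$ that applies $\mathcal{M}$ copywise to form $\bm{p}'$ and then accepts precisely when $d_{TV}(\bm{p}',\bm{\tau})>7\epsilon/8$, the midpoint of the two thresholds $3\epsilon/4$ and $\epsilon$. Concretely, indexing outcome strings by $\vec{j}=(j_1,\dots,j_N)\in[K]^N$ and writing $\bm{p}'(\vec{j})$ for the induced empirical distribution, I set
\begin{align*}
B=\sum_{\vec{j}\,:\,d_{TV}(\bm{p}'(\vec{j}),\bm{\tau})>7\epsilon/8}E_{j_1}\otimes\cdots\otimes E_{j_N}.
\end{align*}
This is a legitimate quantum event: it is a sum of positive semidefinite operators, and it is dominated by $\sum_{\vec{j}}E_{j_1}\otimes\cdots\otimes E_{j_N}=(\sum_{j}E_j)^{\otimes N}=\mathbb{I}$, so $0\preceq B\preceq\mathbb{I}$. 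By construction $\mathbb{E}_{\rho^{\otimes N}}[B]=\Pr\big(d_{TV}(\bm{p}',\bm{\tau})>7\epsilon/8\big)$, the probability being over the empirical distribution. Here $B$ depends on the comparison distribution $\bm{\tau}$ (equivalently, on the threshold vector of \Cref{prob:QThresK}), and only one of the two hypotheses on $\bm{\tau}$ can hold for a given $\bm{p}$.

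The two implications then follow from the triangle inequality for $d_{TV}$ on the high-probability event of the first assertion. If $d_{TV}(\bm{p},\bm{\tau})>\epsilon$, then whenever $d_{TV}(\bm{p},\bm{p}')\leq\epsilon/8$ we get $d_{TV}(\bm{p}',\bm{\tau})\geq d_{TV}(\bm{p},\bm{\tau})-\epsilon/8>7\epsilon/8$, so $B$ accepts and $\mathbb{E}_{\rho^{\otimes N}}[B]>1-\delta$; symmetrically, if $d_{TV}(\bm{p},\bm{\tau})\leq 3\epsilon/4$, then on the same event $d_{TV}(\bm{p}',\bm{\tau})\leq 3\epsilon/4+\epsilon/8=7\epsilon/8$, so $B$ rejects and $\mathbb{E}_{\rho^{\otimes N}}[B]\leq\delta$. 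I expect no genuine difficulty here. The only things requiring care are the threshold bookkeeping---the $\epsilon/8$ sampling slack must fit inside the $\epsilon/4$ gap between $3\epsilon/4$ and $\epsilon$, which is exactly why the midpoint $7\epsilon/8$ is safe on both sides---and the verification that $B$ is a valid POVM element on the joint system, which rests on the tensor-product structure of independent copywise measurement together with $\sum_j E_j=\mathbb{I}$.
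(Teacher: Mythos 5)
Your proposal is correct and follows essentially the same route as the paper: instantiate the vector-Bernstein sampling bound \eqref{eq:TotVarDisSampBound} at accuracy $\epsilon/8$, build $B$ as the sum of tensor-product events $E_{j_1}\otimes\cdots\otimes E_{j_N}$ over outcome strings whose empirical distribution is farther than $7\epsilon/8$ from $\bm{\tau}$ (the paper merely groups these terms by their count vectors $\bm{k}$ into operators $A_{\bm{k}}$ before summing), and finish with the same triangle-inequality argument on the high-probability event. The only difference is cosmetic; your explicit check that $0\preceq B\preceq\mathbb{I}$ is a detail the paper leaves implicit.
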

\begin{proof}
We assign an index for each single copy $\rho$ in the joint state $\rho^{\otimes N}$ and assume each single $\rho$ occupies a ``register". For all $N$-bit classical strings $\bm{x}=(x_1,...,x_N)\in[K]^N$, we define quantum events $E_{\bm{x}}=E_{x_1}\otimes..\times E_{x_N}$ to be the tensor product of quantum event $E_{x_i}$ in the $i$-th register. It is easy to verify that $\sum_{\bm{x}\in\{0,1\}^N}E_x=\mathbb{I}$. For all $K$-dimensional positive integer arrays $\bm{k}=(k_1,...,k_K)$ with $\sum_{j=1}^Kk_j=N$, we define quantum event $A_{\bm{k}}$ to be
\begin{align*}
A_{\bm{k}}=\sum_{\bm{x}\in[K]^{\otimes N}\atop[\text{num of }x_i=j]=k_j}E_x.
\end{align*}
Then the empirical approximation $\bm{p}'$ is chosen as $\bm{p}'=\bm{k}/N$. Since each entry $k_i$ of $\bm{k}$ is distributed as Binomial$(N,\Tr(E_i\rho))$, we can bound the following probability using Eq.~\eqref{eq:TotVarDisSampBound}:
\begin{align}\label{eq:LemExpEstGua}
\Pr\left(d_{TV}(\bm{p},\bm{p}')\geq\frac\epsilon8\right)\leq\delta
\end{align}
as $N=O(K\log(1/\delta)/\epsilon^2)$.

We define a function $f:[0,1]^{\otimes K}\to\{0,1\}$ by
\begin{align*}
f(\bm{t})=\begin{cases}
&1,\text{ }d_{TV}(\bm{t},\bm{\tau})\geq \frac{7\epsilon}{8},\\
&0,\text{ otherwise}.
\end{cases}
\end{align*}
Based on this function, we define quantum event $B$ by
\begin{align*}
B&=\sum_{\bm{k}\atop k_1+...+k_K=N}f\left(\frac{\bm{k}}{N}\right) A_{\bm{k}}.
\end{align*}
As each entry $k_i$ of $\bm{k}$ is distributed as Binomial$(N,\Tr(E_i\rho))$, we can observe that
\begin{align*}
\mathbb{E}_{\rho^{\otimes N}}[B]&=\Pr\left(d_{TV}(\bm{p}',\bm{\tau})\geq \frac{7\epsilon}{8}\right).
\end{align*}
Recall the guarantee in Eq.~\eqref{eq:LemExpEstGua}. The condition $d_{TV}(\bm{p},\bm{\tau})>\epsilon$ implies that $d_{TV}(\bm{p}',\bm{\tau})\geq7\epsilon/8$ by triangle inequality. Hence,
\begin{align*}
\mathbb{E}_{\rho^{\otimes N}}[B]&=\Pr\left(d_{TV}(\bm{p}',\bm{\tau})\geq \frac{7\epsilon}{8}\right)\geq\Pr\left(d_{TV}(\bm{p},\bm{p}')\leq\frac{\epsilon}{8}\right) \geq 1-\delta.
\end{align*}
Similarly, the condition $d_{TV}(\bm{p},\bm{\tau})\leq3\epsilon/4$ implies that $d_{TV}(\bm{p}',\bm{\tau})\leq7\epsilon/8$. Hence, 
\begin{align*}
\mathbb{E}_{\rho^{\otimes N}}[\overline{B}]&=\Pr\left(d_{TV}(\bm{p}',\bm{\tau})> \frac{7\epsilon}{8}\right)>\Pr\left(d_{TV}(\bm{p},\bm{p}')\leq\frac{\epsilon}{8}\right) \geq 1-\delta.
\end{align*}
\end{proof}

Moreover, we can observe that if $E_i$'s are projectors, then $A_{\bm{k}}$'s are also projectors. Since $B$ is a summation of $A_{\bm{k}}$'s, $B$ is also a projector. By using the above lemma, we reduce the shadow tomography procedure of a $K$-outcome POVM to evaluating the expectation of a two-outcome POVM. 

\subsection{Sample Complexity of Quantum Distribution Threshold Search}\label{sec:thresComplexity}
To prove \Cref{thm:QThresKUpper}, we require the following Lemma 4.2 in Ref. \cite{Buadescu2021Improved}.
\begin{theorem}\label{thm:Buadescu2021Improved}
(B$\breve{a}$descu and O'Donnell \cite{Buadescu2021Improved}) Suppose we are given an unknown $d$-dimensional quantum state $\rho$, and $M$ quantum projectors $B_1,...,B_M\in\mathbb{C}^{d\times d}$.There exists an algorithm using $O(\log^2 M\log(1/\delta))$ copies of $\rho$ outputs either
\begin{itemize}
    \item $\mathbb{E}_{\rho}[B_{i^*}]=\Tr(B_{i^*}\rho)>1/4$ for some particular $i^*$; or
    \item $\mathbb{E}_{\rho}[B_{i}]\leq3/4$ for all $i$.
\end{itemize}
The success probability is at least $1-\delta$.
\end{theorem}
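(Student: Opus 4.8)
The plan is to reconstruct the B\u{a}descu--O'Donnell threshold search as a single sequential pass of two-outcome measurements over one shared batch of copies, reusing copies across all $M$ indices and exploiting gentleness of measurement to bound how much the pass disturbs the state. Throughout I would work with a batch $\rho^{\otimes b}$ for a batch size $b$ to be fixed at the end.

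The first step is to amplify each projector $B_i$ into a robust batch projector $\Pi_i$ on $\rho^{\otimes b}$: measure $\{B_i,\mathbb{I}-B_i\}$ on each of the $b$ registers and let $\Pi_i$ project onto the outcomes in which strictly more than half the registers accept. Since this is an empirical majority vote and the true expectation $\Tr(B_i\rho)$ is promised to sit a constant $1/4$ away from the voting threshold $1/2$ in the two decisive regimes, a Chernoff bound gives $\mathbb{E}_{\rho^{\otimes b}}[\Pi_i]\ge 1-e^{-\Omega(b)}$ when $\Tr(B_i\rho)>3/4$ and $\mathbb{E}_{\rho^{\otimes b}}[\Pi_i]\le e^{-\Omega(b)}$ when $\Tr(B_i\rho)\le 1/4$. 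Crucially each $\Pi_i$ is a genuine projector, so the gentle measurement lemma applies, and an acceptance of $\Pi_i$ certifies $\Tr(B_i\rho)>1/4$ except with probability $e^{-\Omega(b)}$.

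The second step is to apply $\Pi_1,\Pi_2,\dots,\Pi_M$ in order to the single state $\rho^{\otimes b}$, halting and returning $i^*=i$ at the first acceptance, and returning ``all $\le 3/4$'' if no test ever accepts. Soundness is the routine direction: if every $\Tr(B_i\rho)\le 1/4$ then each $\Pi_i$ would accept the undisturbed batch with probability $e^{-\Omega(b)}$, so rejection is the overwhelmingly likely outcome at each step, the running state stays $e^{-\Omega(b)}$-close to $\rho^{\otimes b}$ in trace distance by the gentle measurement lemma, and a quantum union bound of the form $\Pr[\text{some false accept}]\le O(\sqrt{\sum_j \eta_j})$ caps the total false-acceptance probability of the pass. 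Balancing the square-root loss of that union bound against the $M$ steps already forces $b=\Omega(\log M)$ for this direction alone.

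The main obstacle lies in the completeness direction, where the accumulated disturbance must be controlled even against adversarially placed ``threshold'' projectors. A rejection of $\Pi_j$ keeps the batch intact only when rejection was likely; the dangerous case is a projector with $\Tr(B_j\rho)$ within $O(1/\sqrt{b})$ of $1/2$, whose $\Pi_j$ rejects with non-negligible probability yet, on that branch, can move the batch state by trace distance $\Theta(1)$. If several such projectors precede a genuinely large $B_{i^*}$ (with $\Tr(B_{i^*}\rho)>3/4$), a chain of unlucky rejections could scramble the state enough that $\Pi_{i^*}$ also rejects and the pass wrongly reports ``all small.'' I would handle this exactly as in B\u{a}descu--O'Donnell: observe first that \emph{any} acceptance is a valid output, so the only completeness failure is reaching $i^*$ with no prior acceptance; then bound the probability of that event by a quantitative gentle-measurement estimate applied along the all-reject branch, and amplify by repeating the whole pass on fresh copies and reporting any acceptance. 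Driving the per-pass failure below a constant against the worst-case configuration is what supplies the \emph{second} logarithmic factor; multiplying the per-pass cost by the number of repetitions and rescaling $\delta$ then yields the claimed $O(\log^2 M\,\log(1/\delta))$ copies with success probability $1-\delta$. I expect verifying this disturbance bound---rather than the Chernoff or union-bound steps---to be the technical heart of the argument.
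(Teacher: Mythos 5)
First, a framing point: the paper does not actually prove this statement --- it is imported wholesale from B\u{a}descu--O'Donnell, and the text explicitly omits the proof, noting only that it rests on the \emph{$\chi^2$-stable threshold reporting} technique. So your attempt must be judged against that argument, and the gap is precisely that your construction contains none of its essential mechanism. Your step 3 defers the hard part to ``exactly as in B\u{a}descu--O'Donnell,'' but what they actually do is \emph{randomize the threshold} of the amplified test, so that any fixed mean $\Tr(B_j\rho)$ is near the (random) decision boundary only with small probability, and then bound the \emph{expected} measurement damage through a $\chi^2$-divergence stability lemma for the classical threshold-reporting process. Your scheme fixes the majority threshold at $1/2$, and against a fixed threshold the naive gentle-measurement accounting provably does not close. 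Concretely: for any batch size $b$, an adversary can place means $\Tr(B_j\rho)\in(1/4,3/4]$ at offset $\Theta(\sqrt{\log M/b})$ below $1/2$ so that each amplified $\Pi_j$ accepts the pristine batch with probability $\eta_j=\Theta(1/M^2)$. Then $\sum_j\eta_j=O(1/M)$, so your union bound happily certifies that the all-reject branch is reached with probability $1-o(1)$; but the per-rejection damage is $\Theta(\sqrt{\eta_j})=\Theta(1/M)$, and summing along the branch gives accumulated trace-distance damage $\sum_j\sqrt{\eta_j}=\Theta(1)$. After constant damage, a genuinely large projector ($\Tr(B_{i^*}\rho)>3/4$) can reject with constant probability, and --- worse --- a genuinely small one ($\leq 1/4$) can \emph{falsely accept} with constant probability, so both completeness and soundness fail in mixed configurations. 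The union bounds you invoke are stated against the original state and cannot rescue this, because the dangerous near-threshold projectors have non-negligible acceptance probability there; this accumulated-damage obstruction is exactly what the random threshold was invented to kill, and it is absent from your proposal.

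Your amplification step has an independent flaw. Repeating the pass on fresh copies and ``reporting any acceptance'' is a one-sided amplifier: it boosts the chance of catching a large projector, but it boosts false acceptances at the same rate. Since an accepted index comes with the claim $\Tr(B_{i^*}\rho)>1/4$, a per-pass false-accept probability that is merely constant (which, by the above, you have not excluded) becomes a near-certain error after $\Theta(\log M)$ repetitions; and majority-vote amplification does not apply either, since distinct passes may legitimately accept distinct indices. So the second $\log M$ factor cannot be purchased by repetition-until-constant-success as you propose; in the actual argument it emerges from the stability analysis itself. Your step 1 (majority-vote amplification to projectors, with Chernoff polarization) and the soundness-only case in step 2 are fine, and they do correctly force $b=\Omega(\log M)$; but the heart of the theorem --- controlling adversarially placed near-threshold events within a single adaptive pass --- is missing, not merely deferred.
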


The proof of this theorem employs the $\chi^2$-stable threshold reporting technique, which is a quantum version of classical statistical results fitting into the adaptive data analysis framework. We omit the details here and refer to Ref. \cite{Smith2017NoteADA}, for example, for the related background.

Now, we begin to prove \Cref{thm:QThresKUpper}. Notice that the assumptions on $E_{i,j}$ is a quantum event in Problem~\ref{prob:QThresK} while the assumptions for $E_{i,j}$ is a projector in \Cref{thm:Buadescu2021Improved}, we have to first reduce the theorem to the case of projectors. Let $\rho\in\mathbb{C}^{d\times d}$ be the unknown quantum state and $E_{i,j}$ be the quantum events for $i\in[M]$ and $j\in[K]$. We need the following Naimark's theorem(see, for example, \cite{Riesz2012Functional,Akhiezer2013Theory}).
\begin{theorem}\label{thm:Naimark}
(Naimark) Suppose $E\in\mathbb{C}^{d\times d}$ is a quantum event, then there exists a projector $\Pi$ on the space $\mathbb{C}^{2d\times 2d}$, such that for arbitrary $\rho$,
\begin{align*}
\mathbb{E}_{\rho\otimes\ket{0}\bra{0}}[\Pi]=\mathbb{E}_{\rho}[E].
\end{align*}
\end{theorem}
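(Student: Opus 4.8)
The plan is to produce the dilating projector explicitly rather than to invoke the general Naimark construction. I identify the enlarged space $\mathbb{C}^{2d}$ with $\mathbb{C}^d\otimes\mathbb{C}^2$, treating the second tensor factor as an ancilla qubit initialized to $\ket{0}$. Under this identification every operator on $\mathbb{C}^{2d}$ is a $2\times2$ array of $d\times d$ blocks indexed by the ancilla, and the input state is block-diagonal,
\begin{align*}
\rho\otimes\ket{0}\bra{0}=\begin{pmatrix}\rho&0\\0&0\end{pmatrix}.
\end{align*}
Consequently, for any $\Pi$ written in blocks as $\Pi=\begin{pmatrix}\Pi_{00}&\Pi_{01}\\\Pi_{10}&\Pi_{11}\end{pmatrix}$ one has $\mathbb{E}_{\rho\otimes\ket{0}\bra{0}}[\Pi]=\Tr(\Pi_{00}\rho)$. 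Thus it suffices to exhibit a projector on $\mathbb{C}^{2d}$ whose top-left block equals $E$.

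The construction I would use is
\begin{align*}
\Pi=\begin{pmatrix}E&\sqrt{E(\mathbb{I}-E)}\\\sqrt{E(\mathbb{I}-E)}&\mathbb{I}-E\end{pmatrix}.
\end{align*}
Since $0\preceq E\preceq\mathbb{I}$, both $E$ and $\mathbb{I}-E$ are positive semidefinite and commute (each is a function of $E$ via the spectral theorem), so $E(\mathbb{I}-E)\succeq0$ and its Hermitian square root $F:=\sqrt{E(\mathbb{I}-E)}$ is well defined and is again a function of $E$. In particular $F$ commutes with both $E$ and $\mathbb{I}-E$, and all four blocks are Hermitian, so $\Pi=\Pi^\dagger$.

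It then remains to check idempotency. Multiplying the blocks and using $EF=FE$ gives the top-left block $E^2+F^2=E^2+E(\mathbb{I}-E)=E$, the bottom-right block $F^2+(\mathbb{I}-E)^2=\mathbb{I}-E$, and the off-diagonal blocks $EF+F(\mathbb{I}-E)=F$ together with its adjoint, so $\Pi^2=\Pi$. Hence $\Pi$ is a projector with $\Pi_{00}=E$, and by the block computation above $\mathbb{E}_{\rho\otimes\ket{0}\bra{0}}[\Pi]=\Tr(E\rho)=\mathbb{E}_\rho[E]$ for every $\rho$, as claimed.

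The only genuinely delicate points are ensuring the square root is legitimate, which is exactly where the hypothesis $0\preceq E\preceq\mathbb{I}$ enters (it guarantees $E(\mathbb{I}-E)\succeq0$), and exploiting that $E$ and $F$ commute when verifying the off-diagonal blocks of $\Pi^2$; everything else is routine block arithmetic. A cleaner but less explicit route would be to form the isometry $W=\begin{pmatrix}\sqrt{E}\\\sqrt{\mathbb{I}-E}\end{pmatrix}\colon\mathbb{C}^d\to\mathbb{C}^{2d}$, note $W^\dagger W=E+(\mathbb{I}-E)=\mathbb{I}$, and set $\Pi=WW^\dagger$, which is manifestly a projector since $(WW^\dagger)^2=W(W^\dagger W)W^\dagger=WW^\dagger$; because $E$ and $\mathbb{I}-E$ commute this $WW^\dagger$ coincides with the explicit $\Pi$ above, so the two arguments agree.
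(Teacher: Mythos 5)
Your proof is correct and self-contained. The paper, by contrast, does not prove this statement at all: it invokes Naimark's theorem as a black box, citing the functional-analysis literature \cite{Riesz2012Functional,Akhiezer2013Theory}, and immediately uses it to replace each quantum event $E_{i,j}$ by a projector on the doubled space. So your construction supplies an argument the paper leaves to references, and the two routes differ in character: the cited theorem is the general statement that an entire POVM can be dilated to a projective measurement on an enlarged space, whereas what is needed here is only the single-event version, which your explicit block matrix
\begin{align*}
\Pi=\begin{pmatrix}E&\sqrt{E(\mathbb{I}-E)}\\\sqrt{E(\mathbb{I}-E)}&\mathbb{I}-E\end{pmatrix}
\end{align*}
handles directly --- it is precisely the Naimark dilation of the two-outcome POVM $\{E,\mathbb{I}-E\}$, as your isometry formulation $\Pi=WW^\dagger$ with $W^\dagger W=\mathbb{I}$ makes transparent. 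Your verification is sound on the two points that need care: $E(\mathbb{I}-E)\succeq 0$ because $E$ and $\mathbb{I}-E$ are commuting positive semidefinite operators (so $F=\sqrt{E(\mathbb{I}-E)}$ exists, is Hermitian, and commutes with $E$), and the block arithmetic for $\Pi^2=\Pi$ then closes using exactly that commutativity; the reduction of $\mathbb{E}_{\rho\otimes\ket{0}\bra{0}}[\Pi]$ to $\Tr(\Pi_{00}\rho)$ is basis-convention independent. What your approach buys is elementarity (spectral theorem plus block multiplication) and an explicit witness that a single ancilla qubit suffices, i.e.\ the dilation lives on $\mathbb{C}^{2d\times 2d}$, which is exactly the factor-of-two dimension increase the paper relies on when asserting the reduction makes no difference to the sample complexity. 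What the citation buys the paper is brevity and the more general full-POVM statement; note, however, that the paper only ever applies the theorem event by event, so your single-event proof covers everything the paper actually needs.
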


We first extend the state $\rho$ to $\rho\otimes\ket{0}\bra{0}$. We can always find a projector $E'_{i,j}\in\mathbb{C}^{2d\times 2d}$ such that $\Tr(E_{i,j}\rho)=\Tr(E'_{i,j}\rho\otimes\ket{0}\bra{0})$. We remark that the state $\rho\otimes\ket{0}\bra{0}$ can be prepared even without knowing $\rho$ and the dimension of the system increases by a factor of $2$, thus making no differences to the sample complexity. Therefore, in the following, we assume that $E_{i,j}$ are projectors.

Suppose we are given $M$ $K$-outcome POVMs $\mathcal{M}_1,...,\mathcal{M}_M$ and $M$ threshold vectors $\bm{\theta}_1,...,\bm{\theta}_M$. We first apply Lemma~\ref{lem:ExpEst} with parameters $\delta=1/4$ and $\tau=\bm{\theta}_i$ for each measurement $\mathcal{M}_i$. Therefore, we can find some $N_0=O(K/\epsilon^2)$ such that each measurement $\mathcal{M}_i$ can be replaced by a quantum event $B_i\in(\mathbb{C}^{d\times d})^{\otimes N_0}$ satisfying
\begin{itemize}
    \item if $d_{TV}(\bm{p}_i,\bm{\theta}_i)>\epsilon$, $\mathbb{E}_{\rho^{\otimes N_0}}[B_i]>3/4$;
    \item if $d_{TV}(\bm{p}_i,\bm{\theta}_i)\leq3\epsilon/4$, $\mathbb{E}_{\rho^{\otimes N_0}}[B_i]\leq1/4$;
\end{itemize}
Here $\bm{p}_i$ is the actual distribution after applying $\mathcal{M}_i$ on $\rho$. Since $E_{i,j}$'s are projectors, quantum events $B_i$ are also projectors. 

We then apply \Cref{thm:Buadescu2021Improved} by setting each $B_i$ to be the projectors we have just constructed and unknown state to be $\rho'=\rho^{\otimes N_0}$. If the algorithm outputs $i^*$ such that $\mathbb{E}_{\rho'}[B_{i^*}]>1/4$, then we find $d_{TV}(\bm{p}_{i^*},\bm{\theta}_{i^*})>3\epsilon/4$. Otherwise, we can guarantee that $d_{TV}(\bm{p}_{i},\bm{\theta}_{i})\leq\epsilon$ for all $i\in[M]$ with high probability.

\section{Shadow Tomography of $K$-outcome POVMs}\label{sec:shadow}
In this section, we first prove \Cref{thm:OnlineUpper} in \Cref{sec:shadowOnline}. We then prove the first main result of our paper, \Cref{thm:MainUpper}.

\subsection{Online Learning of Quantum States}\label{sec:shadowOnline}

We suppose there are in total $T$ iterations when the learner performs an update procedure. In the update procedure, the learner follows the template of the Regularized Follow-the-Leader algorithm (RFTL) as Algorithm~\ref{algo:RFTLKShad}.
\begin{algorithm}[t] 
\caption{RFTL for Quantum Tomography of $K$-outcome POVMs}
\begin{algorithmic}[1] \label{algo:RFTLKShad}
\STATE Input: $T$, $\eta<\frac{1}{2}$
\STATE Set $\omega_1:=\mathbb{I}/d$.
\FOR{$t=1,...,T$}
\STATE Predict $\omega_t$. Consider the loss
function $\ell_t:\mathbb{R}^{K-1}\to\mathbb{R}$ given by measurement $\mathcal{M}_t:
\ell_t(\Tr(E_{t,1}\varphi),...,\Tr(E_{t,K-1}\varphi))$. It has the same value with the loss function defined in Eq.~\eqref{eq:DefLoss}. Let $\partial\ell_t/\partial x_j$ be a sub-derivative of $\ell_t$ with respect to $x_j$ for $j\in[K-1]$. Define
\begin{align}\label{eq:RFTLLoss}
\nabla_t:=\sum_{j=1}^{K-1}\frac{\partial\ell_t}{\partial (\Tr(E_{t,j})\omega_t)}E_{t,j}.
\end{align}

\STATE Update decision according to the RFTL rule with von Neumann entropy:
\begin{equation} \label{eq:UpdateRule}
\omega_{t+1}:=\arg\min_{\varphi\in\mathbb{C}^{d\times d}}\left\{\eta\sum_{s=1}^t\Tr(\nabla_s\varphi)+\sum_{i=1}^{d}\lambda_i(\varphi)\log\lambda_i(\varphi) \right\},
\end{equation}
where $\lambda_i(A)$ denotes the $i$-th eigenvalue of Hermitian matrix $A\in\mathbb{C}^{d\times d}$
\ENDFOR
\end{algorithmic}
\end{algorithm}

Algorithm~\ref{algo:RFTLKShad} employs von Neumann entropy, which relates to the Matrix Exponentiated Gradient algorithm \cite{Tsuda2005Matrix}. We remark that the loss function defined in Eq.~\eqref{eq:RFTLLoss} of the RFTL algorithm is slightly different from the definition in Eq.~\eqref{eq:DefLoss} in that it takes a vector of $K-1$ entries instead of $K$ entries. This is because the input vectors in Eq.~\eqref{eq:DefLoss} is supposed to be a probability distribution such that the summation of all entries is $1$. Therefore, there are $K-1$ free parameters. We rewrite the loss function with an input vector containing only free entries as Eq.~\eqref{eq:RFTLLoss}. According to the definition of regret in Eq.~\eqref{eq:DefRegret}, we now provide the following regret bound on this RFTL algorithm.
\begin{theorem}\label{thm:OnlineRegretUpper}
Setting $\eta=\sqrt{\log d/8T}$, the regret $R_T$ of Algorithm~\ref{algo:RFTLKShad} is bounded by $4\sqrt{(2\log 2)T\log d}$. 
\end{theorem}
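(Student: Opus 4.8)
The plan is to treat Algorithm~\ref{algo:RFTLKShad} as an instance of Regularized Follow-the-Leader with the (negative) von Neumann entropy $R(\varphi)=\sum_{i=1}^d\lambda_i(\varphi)\log\lambda_i(\varphi)$ as the regularizer, and to invoke the generic RFTL regret lemma of Hazan et al.~\cite{Hazan2016Introduction}. That lemma decomposes the regret into two pieces, a ``diameter'' term controlled by the range of $R$ over the density matrices and a ``stability'' term controlled by the local (dual) norms of the gradients $\nabla_t$:
\begin{align*}
R_T\;\leq\;\frac{1}{\eta}\Big(\max_{\varphi}R(\varphi)-R(\omega_1)\Big)\;+\;2\eta\sum_{t=1}^T\norm{\nabla_t}_{t}^{*2},
\end{align*}
where $\norm{\cdot}_t^*$ is the dual norm induced by the Hessian of $R$ at an intermediate point on the segment from $\omega_t$ to $\omega_{t+1}$. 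The remaining work is to bound each of the two terms for our specific regularizer and loss.

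For the diameter term, recall $\omega_1=\mathbb{I}/d$, so $R(\omega_1)=\sum_i\frac1d\log\frac1d=-\log d$; since the von Neumann entropy is nonnegative, $R(\varphi)\leq 0$ for every density matrix $\varphi$. Hence $\max_\varphi R(\varphi)-R(\omega_1)\leq\log d$, and the first term is at most $(\log d)/\eta$.

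For the stability term, I would first compute the subgradient of the loss. Writing $x_j=\Tr(E_{t,j}\omega_t)$ and eliminating $x_K=1-\sum_{j<K}x_j$, each entry of $\partial\ell_t/\partial x_j$ equals $\tfrac12[\operatorname{sgn}(x_j-b_{t,j})-\operatorname{sgn}(x_K-b_{t,K})]\in[-1,1]$. Splitting $\nabla_t=\sum_{j=1}^{K-1}(\partial\ell_t/\partial x_j)E_{t,j}$ into its positive and negative coefficient parts and using $0\preceq E_{t,j}$ together with $\sum_jE_{t,j}=\mathbb{I}$, each part is bounded between $0$ and $\mathbb{I}$, so $-\mathbb{I}\preceq\nabla_t\preceq\mathbb{I}$ and $\norm{\nabla_t}_{\mathrm{op}}\leq 1$. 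The key technical step is then the operator inequality showing that the dual local norm of the von Neumann entropy controls the gradient by its operator norm, $\norm{\nabla_t}_t^{*2}\leq c\cdot\norm{\nabla_t}_{\mathrm{op}}^2\leq c$; equivalently, one runs the Matrix-Exponentiated-Gradient potential argument, using Golden--Thompson $\Tr[e^{A+B}]\leq\Tr[e^Ae^B]$ together with a scalar bound of the form $e^{-x}\leq 1-x+(\log 2)x^2$ on the eigenvalues of $\eta\nabla_t$, to replace the per-step stability contribution by $(\log 2)\eta\Tr(\omega_t\nabla_t^2)\leq(\log 2)\eta$.

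I expect this Hessian/local-norm bound to be the main obstacle, since the Hessian of $\Tr(X\log X)$ is a non-trivial superoperator and one must verify the relevant operator inequality (or, via the MEG route, justify the second-order scalar estimate on the spectrum of $\eta\nabla_t$ and the convexity step $\ell_t(\omega_t)-\ell_t(\varphi)\leq\Tr[\nabla_t(\omega_t-\varphi)]$). Once the stability term is bounded by a constant per step, summing over $t$ gives $R_T\leq(\log d)/\eta+c'\eta T$; substituting the prescribed $\eta=\sqrt{\log d/(8T)}$ and tracking the constant $c'$ (which carries the $\log 2$ factor from the exponential-moment estimate) yields $R_T\leq 4\sqrt{(2\log 2)T\log d}$, completing the proof.
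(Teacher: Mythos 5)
Your proposal shares the paper's high-level frame---RFTL with the negative von Neumann entropy as regularizer, a diameter-plus-stability decomposition of the regret, convexity of $\ell_t$ to pass to linearized losses, and an operator-norm bound on the subgradients $\nabla_t$---but it executes the crucial stability step with a genuinely different tool. The paper stays at the level of Hazan's Lemma 5.2, which leaves the raw stability term $\sum_t\nabla_t\cdot(\omega_t-\omega_{t+1})$, and then bounds each summand elementarily: it proves two auxiliary claims (every iterate satisfies $\omega_t\succeq0$, and the first-order optimality condition $\nabla\Phi_t(\omega_{t+1})\cdot(\omega_t-\omega_{t+1})\geq0$), and then combines Pinsker's inequality (strong convexity of the entropy with respect to trace norm) with the generalized Cauchy--Schwarz inequality to get $\nabla_t\cdot(\omega_t-\omega_{t+1})\leq2\eta\norm{\nabla_t}^2$; no Hessian of $\Tr(X\log X)$ and no Golden--Thompson inequality ever appear. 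You instead invoke the local-norm form of the RFTL theorem, or equivalently the Matrix-Exponentiated-Gradient potential argument \cite{Tsuda2005Matrix}. That route is viable, and it buys two things the paper's proof does not have: your gradient bound $\norm{\nabla_t}_{\mathrm{op}}\leq1$ is tighter than the paper's bound of $2$ and, notably, your positive/negative-part argument needs only $E_{t,j}\succeq0$ and $\sum_jE_{t,j}=\mathbb{I}$, whereas the paper's norm bound leans on the $E_{t,j}$ being mutually orthogonal projectors, an assumption it imports without comment.

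Two points must be nailed down before your plan counts as a proof, because the step you flag as ``the main obstacle'' is exactly where the paper spends all of its effort. First, you cannot cite the local-norm bound as a black box: Hazan's book carries it out only for the vector (simplex) entropy, so in the matrix case you must either prove the dual-local-norm inequality $\norm{\nabla_t}_t^{*2}\leq c\norm{\nabla_t}_{\mathrm{op}}^2$ directly---which amounts to the nontrivial trace inequality $\Tr(AZ^{u}AZ^{1-u})\leq\Tr(A^2Z)$ for Hermitian $A$ and density matrices $Z$---or run the Golden--Thompson potential argument in full. Second, for the Golden--Thompson route two details need care: (a) you must first establish that the RFTL argmin in Eq.~\eqref{eq:UpdateRule}, taken over density matrices, has the closed form $\omega_{t+1}=\exp(-\eta\sum_{s\leq t}\nabla_s)/\Tr[\exp(-\eta\sum_{s\leq t}\nabla_s)]$ (Gibbs variational principle), since the potential argument is an argument about these exponential iterates; and (b) your scalar estimate $e^{-x}\leq1-x+(\log2)x^2$ is false for $x\leq-1$ (at $x=-1$ the left side is $e\approx2.718$ while the right side is $2+\log2\approx2.693$), so it must be stated with a validity range such as $\abs{x}\leq1/2$---which is legitimate here because the algorithm assumes $\eta<1/2$ and you have $\norm{\nabla_t}_{\mathrm{op}}\leq1$, but the restriction has to be checked explicitly. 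With these repairs your argument goes through and yields constants at least as good as the paper's.
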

\begin{proof}
We mainly follow the template of the proof for Theorem 3 in Ref. \cite{Aaronson2018Online}, but there are some differences since the loss function is different. We first observe that the loss function $\ell_t(\Tr(E_{t,1}\varphi),...,\\\Tr(E_{t,K-1}\varphi))$ is convex. There are at most two terms that contain each $\Tr(E_{t,j}\varphi)$ in the loss function when calculating the sub-derivative over each value $\Tr(E_{t,j}\varphi)$:
\begin{itemize}
    \item The variance in the $j$-th entry: $1/2\abs{\Tr(E_{t,j}\varphi)-b_{t,j}}$;
    \item The variance in the last entry: $1/2\abs{\Tr(E_{t,K}\varphi)-b_{t,K}}$ as $\Tr(E_{t,K}\varphi)=1-\sum_{j=1}^{K-1}\Tr(E_{t,j}\varphi)$.
\end{itemize}
Therefore, the value of sub-derivative $\partial\ell_t/\partial (\Tr(E_{t,j}))$ is either $\pm 1$ or $0$. We can divide all indexes $j$ of $E_{t,j}$ into three subsets $S_{t,1},S_{t,-1}$, and $S_{t,0}$ such that the value of $\partial\ell_t/\partial (\Tr(E_{t,j}))$ is $1,-1$, and $0$ for $j$ chosen from $S_{t,1},S_{t,-1}$, and $S_{t,0}$. We thus rewrite $\nabla_t$ as:
\begin{align*}
\nabla_t=\sum_{j\in S_{t,1}}E_{t,j}-\sum_{j\in S_{t,-1}}E_{t,j}.
\end{align*}
Notice that $E_{t,j}$ are projectors corresponding to different measurement outcomes and $\sum_{j=1}^KE_{t,j}=\mathbb{I}$, each $E_{t,j}$ are orthogonal and the spectral norm of any summation $\norm{\sum_{j\in[K]}E_{t,j}}\leq 1$. We can thus bound the spectral norm of $\nabla_t$ below by 
\begin{align*}
\norm{\nabla_t}\leq\norm{\sum_{j\in S_{t,1}}E_{t,j}}+\norm{\sum_{j\in S_{t,-1}}E_{t,j}}\leq 2.
\end{align*} 

In the following, we denote $\bm{\mu}_t=\Tr(E_{t,1}\omega_t),...,\Tr(E_{t,K-1}\omega_t)$ and $\bm{\tau}_t=\Tr(E_{t,1}\varphi),...,\Tr(E_{t,K-1}\varphi)$ for simplicity. Since $\ell_t$ is convex,
\begin{align*}
\ell_t(\bm{\mu}_t)-\ell_t(\bm{\tau}_t)\leq\nabla_t\cdot(\omega_t-\varphi)
\end{align*}
holds for all $\varphi\in\mathbb{C}^{d\times d}$, where $\cdot$ denotes the trace inner-product between complex matrices. Summing over $t$, we obtain
\begin{align*}
\sum_{t=1}^T[\ell_t(\bm{\mu}_t)-\ell_t(\bm{\tau}_t)]\leq\sum_{t=1}^T[\Tr(\nabla_t\omega_t)-\Tr(\nabla_t\varphi)].
\end{align*}
We define $g_t(X)=\nabla_t\cdot X$ for $X\in\mathbb{C}^{d\times d}$ and $H(X)$ to be the negative von Neumann Entropy of $X$. By Lemma 5.2 in Ref. \cite{Hazan2016Introduction}, we have
\begin{align}\label{eq:HazanLem5-2}
\sum_{t=1}^T[g_t(\omega_t)-g_t(\varphi)]\leq\sum_{t=1}^T\nabla_t\cdot(\omega_t-\omega_{t+1})+\frac{1}{\eta}D_R^2
\end{align}
for any $\varphi\in\mathbb{C}^{d\times d}$, where $D_R^2:=\max_{\varphi,\varphi'\in\mathbb{C}^{d\times d}}\{R(\varphi)-R(\varphi')\}$. We define $\Phi_t(X)=\eta\sum_{s=1}^t\nabla_s\cdot X+R(X)$, then line $5$ of Algorithm~\ref{algo:RFTLKShad} finds the minimal value of $\Phi_t(X)$ in $\mathbb{C}^{d\times d}$. To prove the theorem, we need the following two claims.
\begin{claim}
For all $t\in\{1.,,,.T\}$, we have $\omega_t\succeq 0$.
\end{claim}
\begin{proof}
Consider a Hermitian matrix $P\in\mathbb{C}^{d\times d}$ with zero minimal eigenvalue---i.e.,$\lambda_{\text{min}}=0$. Suppose $P=VQV^\dagger$, where $Q$ is a diagonal matrix with real entries as the eigenvalues of $P$. Assume $Q_{1,1}=\lambda_{\text{max}}(P)$ and $Q_{d,d}=\lambda_{\min}(P)=0$. We consider a different matrix $P'=VQ'V^\dagger$ such that $Q_{1,1}'=Q_{1,1}-\epsilon$, $Q_{i,i}'=Q_{i,i}$ for $i\in\{2,...,d-1\}$, and $Q_{d,d}'=\epsilon$ for $\epsilon<\lambda_{\max}(P)$. We then prove that there exists $\epsilon>0$ that satisfies $\Phi_t(P')\leq\Phi_t(P)$. By expanding both sides of the inequality, we need to prove an equivalent inequality
\begin{align*}
A\cdot(P'-P)\leq \alpha\log\alpha-(\alpha-\epsilon)\log(\alpha-\epsilon)-\epsilon\log\epsilon,
\end{align*}
where $A=\eta\sum_{s=1}^t\nabla_s$ and $\alpha=\lambda_{\max}(P)=Q_{1,1}$. Notice that $\norm{A}\leq\eta\sum_{s=1}^t\norm{\nabla_s}\leq2\eta t$. The left side of the inequality can be bounded using Generalized Cauchy-Schwartz inequality \cite{Bhatia2013Matrix} as
\begin{align*}
A\cdot(P-P')\leq2\eta t\norm{P-P'}_{\Tr}\leq4\epsilon\eta t.
\end{align*}
where $\norm{A}_{\Tr}$ is the trace norm for matrix $A$. As $\log\epsilon\to-\infty$ when $\epsilon\to0$, there exists a small enough $\epsilon$ such that $4\eta t\leq\log\alpha-\log\epsilon$. Therefore, we have
\begin{align*}
4\eta t\epsilon\leq\epsilon\log\alpha-\epsilon\log\epsilon\leq\alpha\log\alpha-(\alpha-\epsilon)\log(\alpha-\epsilon)-\epsilon\log\epsilon.
\end{align*}
This indicates that there exists $\epsilon$ that is small enough such that $\Phi_t(P')\leq\Phi_t(P)$. If $P$ has more than one zero eigenvalues, we can repeat the proof and construct the matrix $P'$. As $\omega_t$ is a minimal point of $\Phi_{t-1}$ and $\omega_1\succeq 0$, we have $\omega_t\succeq0$ for all $t$.
\end{proof}

Now, we can focus on $X\succeq 0$ and write $R(X)=\Tr(X\log X)$. We can further calculate the gradient of $\Phi_t(X)$ as
\begin{align*}
\nabla\Phi_t(X)=\eta\sum_{s=1}^t\nabla_s+\mathbb{I}+\log X.
\end{align*}
Here, we assume that the function $\Phi_t(X)$ is defined over real symmetric matrices. We can further prove the following claim.
\begin{claim}\label{clm:NonNegProd}
For all $t\in\{1,...,T-1\}$, $\nabla\Phi_t(\omega_{t+1})\cdot(\omega_t-\omega_{t+1})\geq0$.
\end{claim}
\begin{proof}
We inversely assume that $\nabla\Phi_t(\omega_{t+1})\cdot(\omega_t-\omega_{t+1})<0$. We choose a parameter $a\in(0,1)$ and construct $\overline{X}=(1-a)\omega_{t+1}+a\omega_t$. Then $\overline{X}\succeq0$ is also a density matrix. We denote $\Delta=\overline{X}-\omega_{t+1}=a(\omega_t-\omega_{t+1})$. According to Theorem 2 in Ref. \cite{Audenaert2005Continuity}, we have
\begin{align*}
\Phi_t(\overline{X})-\Phi_t(\omega_{t+1})&\leq a\nabla\Phi_t(\omega_{t+1})\cdot(\omega_t-\omega_{t+1})+\frac{\Tr(\Delta^2)}{\lambda_{\min}(\omega_{t+1})}\\
&=a\nabla\Phi_t(\omega_{t+1})\cdot(\omega_t-\omega_{t+1})+\frac{a^2\Tr((\omega_t-\omega_{t+1})^2)}{\lambda_{\min}(\omega_{t+1})}.
\end{align*}
Then we divide the above inequality by $a$ on both side and get
\begin{align*}
\frac{\Phi_t(\overline{X})-\Phi_t(\omega_{t+1})}{a}\leq\nabla\Phi_t(\omega_{t+1})\cdot(\omega_t-\omega_{t+1})+\frac{a\Tr((\omega_t-\omega_{t+1})^2)}{\lambda_{\min}(\omega_{t+1})}.
\end{align*}
Since we assume that $\nabla\Phi_t(\omega_{t+1})\cdot(\omega_t-\omega_{t+1})<0$, we can always choose some small enough $a$ such that the right hand side is negative while the left side is always positive since $\Phi_t(\overline{X})>\Phi_t(\omega_{t+1})$. This lead to an contradiction. Therefore, we have proved that $\nabla\Phi_t(\omega_{t+1})\cdot(\omega_t-\omega_{t+1})\geq0$.
\end{proof}

We define
\begin{align*}
B_{\Phi_t}(\omega_t||\omega_{t+1}):=\Phi_t(\omega_t)-\Phi_t(\omega_{t+1})-\nabla\Phi_t(\omega_{t+1})\cdot(\omega_t-\omega_{t+1}).
\end{align*}
By Pinsker inequality \cite{Carlen2014Remainder}, we have
\begin{align*}
\frac12\norm{\omega_t-\omega_{t+1}}_{\Tr}^2\leq\Tr(\omega_t\log\omega_t)-\Tr(\omega_t\log\omega_{t+1})=B_{\Phi_t}(\omega_t||\omega_{t+1}).
\end{align*}
Using Claim~\ref{clm:NonNegProd} and $\Phi_{t-1}(\omega_t)\leq\Phi_{t-1}(\omega_{t+1})$, we have
\begin{align*}
B_{\Phi_t}(\omega_t||\omega_{t+1})&=\Phi_t(\omega_t)-\Phi_t(\omega_{t+1})-\nabla\Phi_t(\omega_{t+1})\cdot(\omega_t-\omega_[t+1])\\
&\leq\Phi_t(\omega_t)-\Phi_t(\omega_{t+1})\\
&=\Phi_{t-1}(\omega_t)-\Phi_{t-1}(\omega_{t+1})+\eta\nabla_t\cdot(\omega_t-\omega_{t+1})\\
&\leq\eta\nabla_t\cdot(\omega_t-\omega_{t+1}).
\end{align*}
Therefore,
\begin{align*}
\frac12\norm{\omega_t-\omega_{t+1}}_{\Tr}^2\leq\eta\nabla_t(\omega_t-\omega_{t+1}).
\end{align*}
By Generalized Cauchy-Schwartz inequality, we have
\begin{align*}
\nabla_t\cdot(\omega_t-\omega_{t+1})&\leq\norm{\nabla_t}\norm{\omega_t-\omega_{t+1}}_{\Tr}\\
&\leq\norm{\nabla_t}\sqrt{2\eta\nabla\cdot(\omega_t-\omega_{t+1})}\\
&\leq2\eta\norm{\nabla_t}^2\\
&\leq8\eta.
\end{align*}
We combine this inequality with Eq.~\eqref{eq:HazanLem5-2} and reach the following bound
\begin{align*}
\sum_{t=1}^T\nabla_t\cdot(\omega_t-\varphi)\leq8\eta T+\frac{1}{\eta}D_R^2.
\end{align*}
We take $\eta=\frac{D_R}{2\sqrt{2T}}$. Observe that $D_R^2\leq \log d$ according to the definition of von Neumann entropy, the value for $\eta$ is
\begin{align*}
\eta=\sqrt{\frac{\log d}{8T}}.
\end{align*}
The corresponding regret bound is
\begin{align*}
\sum_{t=1}^T[\ell_t(\bm{\mu}_t)-\ell_t(\bm{\tau}_t)]\leq\sum_{t=1}^T\nabla_t\cdot(\omega_t-\varphi)\leq 4\sqrt{2T\log d}.
\end{align*}
\end{proof}

Now, we begin to prove \Cref{thm:OnlineUpper}. We consider the case that the RFTL is triggered when the prediction $\bm{\mu}_t=(\Tr(E_{t,1}\omega_t),...,\Tr(E_{t,K}\omega_t))$ deviates from the actual probability distribution $\bm{p}_t=(\Tr(E_{t,1}\rho),...,\Tr(E_{t,K}\rho))$ for more than $3\epsilon/4$---i.e.,$d_{TV}(\bm{\mu}_t,\bm{p}_t)>3\epsilon/4$. As the provided distribution $\bm{b}_t$ satisfies $d_{TV}(\bm{b}_t,\bm{p}_t)\leq\epsilon/4$, the loss function $\ell_t$ is at least $\epsilon/2$ by triangle inequality. 

We then consider using the real distribution in each iteration, the loss function is at most $\epsilon/4$ in each iteration. By the regret bound, we have
\begin{align*}
\frac{\epsilon}{2}T\leq\frac{\epsilon}{4}T+4\sqrt{2T\log d}.
\end{align*}
Therefore, we can obtain the upper bound on $T$ as $T\leq O(\log d/\epsilon^2)$. 

\subsection{Online Shadow Tomography of $K$-outcome POVMs}\label{sec:shadowComplexity}

We now prove \Cref{thm:MainUpper} using \Cref{thm:QThresKUpper} and \Cref{thm:OnlineUpper}. We describe our online shadow tomography procedure of $K$-outcome POVMs below.

Given the requirement parameters $\epsilon,\delta$ and the number of measurements $M$, we first define the following ancillary parameters
\begin{align*}
T_0=\left\lceil\frac{C_0\log d}{\epsilon^2}\right\rceil+1,\enspace\delta_0=\frac{\delta}{2T_0},\enspace N_0=\frac{C_1K\log(1/\delta_0)}{\epsilon^2}\log^2M,\enspace N_b=\frac{C_2K\log(1/\delta_0)}{\epsilon^2}\log^2M
\end{align*}
where $C_0,C_1$, and $C_2$ are three parameters that scale at most $\text{poly}(\log\log M,\log\log D,\log(1/\epsilon),\log K)$. The number of copies of $\rho$ will be $N=T_0(N_0+N_b)$, which is indeed
\begin{align*}
N=\tilde{O}\left(\frac{\log(1/\delta)}{\epsilon^4}\cdot K\cdot\log^2 M\log d\right),
\end{align*}
where $\tilde{O}$ hides a $\text{poly}(\log\log M,\log\log D,\log(1/\epsilon),\log K)$ factor.

After receiving $N$ copies of $\rho$, our algorithm first divide these states equally into $T_0$ batches, each consisting $N_0$ states. We prepare two joint states $\rho^{\otimes N_0}$ and $\rho^{\otimes N_b}$ using each batch. Each batch is used for the update procedure in a ``bad iteration" in our online learning procedure.

To begin with, the learner initializes the hypothesis state $\omega_0=\mathbb{I}/d$. In each iteration $t$, it chooses a fresh batch of states and runs the quantum distribution threshold search algorithm using joint state $\rho^{\otimes N_0}$. The threshold is chosen to be the probability distribution $\bm{\mu}_i$ after applying $\mathcal{M}_i$ for $i\in[M]$ on the hypothesis $\omega_t$. According to \Cref{thm:QThresKUpper}, we can always find such $C_1$ to solve this quantum distribution search problem with success probability at least $1-\delta_0$.

If the quantum distribution threshold search declares that for all $i\in[M]$, $d_{TV}(\bm{\mu}_i,\bm{p}_i)\leq\epsilon$. Then we have successfully found a hypothesis such that the probability distributions after applying all $K$-outcome POVMs on this hypothesis are at most $\epsilon$ from that of the unknown state $\rho$.

If the quantum distribution threshold search outputs $i^*$ where $d_{TV}(\bm{p}_{i^*},\bm{\mu}_{i^*})>3\epsilon/4$. We use $\rho^{\otimes N_b}$ for an estimation $\bm{b}_{i^*}$ for the probability distribution after applying $\mathcal{M}_{i^*}$ on $\rho$. According to Eq.~\eqref{eq:TotVarDisSampBound}, we can always find $C_2$ such that with probability at least $1-\delta_0$, one can bound the total variance distance $d_{TV}(\bm{p}_{i^*},\bm{b}_{i^*})\leq\epsilon/4$. We supply this $\bm{b}_{i^*}$ to the learner and the learner employs the Algorithm~\ref{algo:RFTLKShad} to update the hypothesis state into $\omega_{t+1}$. Furthermore, the remaining copies in the current batch will be abandoned. The learner will use a new batch and move into the next iteration.

According to \Cref{thm:OnlineUpper}, the number of "bad iterations" is bounded by $O(\log d/\epsilon^2)$. If there is no failure in all the rounds, we can always find $C_0$ such that we can guarantee that for all $i\in[M]$, $d_{TV}(\bm{\mu}_i,\bm{p}_i)\leq\epsilon$ after the online procedure, where $\bm{\mu}_i$ is obtained by applying $\mathcal{M}_i$ for $i\in[M]$ on the hypothesis $\omega_{T_0}$. Now we calculate the failure probability in this procedure. In each iteration, the success probability for the quantum distribution threshold search and the calculation of $\bm{b}_{i^*}$ are both at least $1-\delta_0$. By the union bound, the probability for failure after $T_0$ iterations is bounded by $2T_0\delta_0=\delta$. 

\subsection{An Exemplary Application}
Here, we provide some applications of our shadow tomography procedure of $K$-outcome POVMs. In quantum mechanics, we are sometimes interested in the expectation value of quantum operators $\{O_i\}_{i=1}^M$:
\begin{align*}
o_i=\expval{O_i}=\Tr(O_i\rho),
\end{align*}
given an unknown quantum state $\rho$. Suppose we perform a quantum measurement $\mathcal{M}_i$ that has $K$ outcomes to estimate the expectation value $o_i$. Then the following corollary holds by using our shadow tomography procedure
\begin{corollary}\label{coro:operator}
We consider an unknown $d$-dimensional quantum state, as well as $M$ quantum operators $O_1,...,O_M$. Assume we can measure each operator $O_i$ using a quantum measurement $\mathcal{M}$ of $K$ results. Then there exists a strategy that can approximate the expectation of each operator $\Tr(O_i\rho)$ within additive error $\epsilon$ using
\begin{align*}
N=\tilde{O}\left(\frac{\max_i\norm{O_i}^4}{\epsilon^4}\cdot K\cdot \log^2 M\log d\right)
\end{align*}
copies of $\rho$. Here, $\norm{\cdot}$ is the spectral norm. The success probability is at least $1-\delta$. 
\end{corollary}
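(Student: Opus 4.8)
The plan is to obtain \Cref{coro:operator} as a direct black-box reduction to \Cref{thm:MainUpper}, exploiting the fact that the expectation value $o_i=\Tr(O_i\rho)$ is a fixed linear functional of the outcome distribution produced by $\mathcal{M}_i$. Concretely, I would write the decomposition induced by the measurement as $O_i=\sum_{j=1}^K\lambda_{i,j}E_{i,j}$, where $\lambda_{i,j}$ is the (known) value assigned to the $j$-th outcome of $\mathcal{M}_i$ and the $E_{i,j}$ are its quantum events; for the projective measurement in the eigenbasis of $O_i$ the $\lambda_{i,j}$ are exactly the eigenvalues, so $\max_j\abs{\lambda_{i,j}}\leq\norm{O_i}$. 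This identifies $o_i=\sum_{j=1}^K\lambda_{i,j}\,p_{i,j}$ with $\bm{p}_i=(\Tr(E_{i,1}\rho),\dots,\Tr(E_{i,K}\rho))$, so that estimating all $\bm{p}_i$ in total variation distance automatically estimates all $o_i$.

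First I would invoke \Cref{thm:MainUpper} with a rescaled accuracy parameter $\epsilon'$ to produce distributions $\bm{b}_1,\dots,\bm{b}_M$ satisfying $d_{TV}(\bm{p}_i,\bm{b}_i)\leq\epsilon'$ simultaneously for all $i$, and I would output the plug-in estimator $\hat{o}_i:=\sum_{j=1}^K\lambda_{i,j}\,b_{i,j}$. The key step is the elementary error bound, which follows from Hölder's inequality together with the definition of total variation distance:
\begin{align*}
\abs{\hat{o}_i-o_i}=\abs{\sum_{j=1}^K\lambda_{i,j}(b_{i,j}-p_{i,j})}\leq\max_j\abs{\lambda_{i,j}}\sum_{j=1}^K\abs{b_{i,j}-p_{i,j}}\leq 2\norm{O_i}\,d_{TV}(\bm{p}_i,\bm{b}_i)\leq 2\norm{O_i}\epsilon'.
\end{align*}

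Choosing $\epsilon'=\epsilon/(2\max_i\norm{O_i})$ then forces $\abs{\hat{o}_i-o_i}\leq\epsilon$ for every $i$, and substituting this value into the sample-complexity bound of \Cref{thm:MainUpper} replaces $1/\epsilon^4$ by $(2\max_i\norm{O_i})^4/\epsilon^4$; since $\tilde{O}$ absorbs the constant $2^4$, the copy count becomes $\tilde{O}(\max_i\norm{O_i}^4\epsilon^{-4}K\log^2 M\log d)$, inheriting the $1-\delta$ success probability from \Cref{thm:MainUpper}. The calculation itself is routine, so the only point genuinely requiring care—and thus the main ``obstacle''—is justifying $\max_j\abs{\lambda_{i,j}}\leq\norm{O_i}$: this is immediate when $\mathcal{M}_i$ is the projective measurement diagonalizing $O_i$, which is the natural meaning of ``measuring $O_i$ with a $K$-outcome measurement,'' but for a more general POVM decomposition one would have to replace $\norm{O_i}$ by $\max_j\abs{\lambda_{i,j}}$ in the final bound.
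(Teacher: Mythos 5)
Your proposal is correct and follows essentially the same route as the paper: invoke \Cref{thm:MainUpper} at accuracy rescaled by $\max_i\norm{O_i}$, then bound the error of the plug-in expectation estimate by the largest outcome value times the $\ell_1$ distance between the true and estimated distributions. Your write-up is in fact slightly more careful than the paper's, since you track the factor of $2$ relating the $\ell_1$ norm to total variation distance and make explicit the assumption $\max_j\abs{\lambda_{i,j}}\leq\norm{O_i}$ (valid for the projective measurement in the eigenbasis of $O_i$), both of which the paper glosses over.
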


To prove this corollary, we can divide the procedure into two steps. 

In the first step, we approximate the distribution after we apply each measurement $\mathcal{M}_i$ within total variance distance $\epsilon/\max_i\norm{O_i}$, which requires $N$ copies of $\rho$ according to \Cref{thm:MainUpper}.

Next, we calculate the expectation value using the distribution we obtained. The additive error for the expectation of $O_{i'}$ is bounded above by
\begin{align*}
\norm{O_{i'}}\cdot\frac{\epsilon}{\max_i\norm{O_i}}\leq\epsilon.
\end{align*}

As an example, we consider a $n$-qubit quantum states that is $d=2^n$-dimensional. We want to measure the expectation value for the operators $\{S_{\hat{n}_i}\}_{i=1}^M$ which measures the spin along $\hat{n}_i$ directions as
\begin{align*}
S_{\hat{n}_i}=\sum_{k=1}^n\sigma_{\hat{n}_i}^k\bigotimes_{k'\neq k}\mathbb{I}^k
\end{align*}
where $\sigma_{\hat{n}_i}^k$ denotes the spin operator along $\hat{n}_i$ on the $k$-th operator and $\mathbb{I}^k$ denotes the identity operator on the $k$-th qubit. Each measurement $\mathcal{M}_{i}$ has $K=n+1$ outcomes. The quantum event corresponding to each outcome $n-2k$ for $k=0,1,...,n$ can be written as a projector
\begin{align*}
A_{n-2k}=\sum_{x\in\{0,1\}^n\atop\abs{x}=x-2k}\ket{x}\bra{x},
\end{align*}
where $\abs{x}$ represents the Hamming weight for string $x$. We can calculate the spectral norm $\norm{\cdot}$ and the Hilbert-Schmidt norm $\norm{\cdot}_{\text{HS}}$ of $S_{\hat{n}_i}$ by
\begin{align*}
&\norm{S_{\hat{n}_i}}_{\text{HS}}=n2^n,\\
&\norm{S_{\hat{n}_i}}=n.
\end{align*}
Therefore, we can approximate the expectation value for $\{S_{\hat{n}_i}\}_{i=1}^M$ using 
\begin{align*}
N=\tilde{O}\left(\frac{\log^7 d}{\epsilon^4}\cdot \log^2 M\right)
\end{align*}
copies of $\rho$ according to Corollary~\ref{coro:operator}, which scales only poly-logarithmic on $d$. However, directly using classical shadow exponential number of samples.

\section{The Lower Bound}\label{sec:lower}
We now show that any shadow tomography procedure of $K$-outcome POVMs requires at least $\Omega(\min\{D^2,K+\log(M)\}/\epsilon^2)$ copies of $\rho$. It is worthwhile to mention that even in the classical special case when $M=1$, the following result (see, for example, \cite{Lee2021Lecture}) still shows that $\Omega(K)$ samples are required to estimate the probability distribution.
\begin{lemma}\label{lem:dislowTr}
Suppose we are given an unknown probability distribution $\mathcal{D}$ over set $\{1,...,K\}$. We use $N$ samples $x_1,...,x_N$ drawn from $\mathcal{D}$ to obtain an approximation $\mathcal{D}'$. Any approximation algorithm that gives an approximation $\mathcal{D}'$ such that $d_{TV}(\mathcal{D},\mathcal{D}')\leq\epsilon$ with probability at least $1-\delta$ requires sample complexity at least
\begin{align*}\label{eq:dislowTr}
\Omega\left(\frac{K+\log(\frac{1}{\delta})}{\epsilon^2}\right).
\end{align*}
\end{lemma}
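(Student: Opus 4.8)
The plan is to prove the lower bound of $\Omega((K+\log(1/\delta))/\epsilon^2)$ for learning a distribution over $[K]$ by the standard two-pronged information-theoretic strategy: establish the $\Omega(K/\epsilon^2)$ term and the $\Omega(\log(1/\delta)/\epsilon^2)$ term separately, then combine. For the $\Omega(K/\epsilon^2)$ part, I would reduce the estimation task to a hypothesis-testing (distinguishing) problem. First I would construct a large \emph{packing} of distributions on $[K]$: a family $\{\mathcal{D}_v\}$ indexed by $v$ in a large combinatorial set (for instance sign patterns $v\in\{-1,+1\}^{K/2}$ obtained by perturbing the uniform distribution, setting $p_{2i-1}=\tfrac{1}{K}(1+c\epsilon v_i)$ and $p_{2i}=\tfrac{1}{K}(1-c\epsilon v_i)$ for a small constant $c$) so that the distributions are pairwise separated by more than $2\epsilon$ in total variation, yet each has small relative entropy $\mathrm{KL}(\mathcal{D}_v\|\mathcal{D}_{v'}) = O(\epsilon^2)$ from its neighbors. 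The $2\epsilon$ separation ensures an $\epsilon$-accurate estimator uniquely identifies the index $v$ with high probability.

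Next I would invoke \emph{Fano's inequality} (already used elsewhere in the paper) in its standard mutual-information form. Placing a uniform prior over the packing and letting $V$ be the true index, $X_1,\dots,X_N$ the samples, and $\hat V$ the index recovered from the estimate, Fano gives a lower bound on the error probability in terms of $H(V) - I(V; X^N)$ against $\log|\mathcal{V}|$. The mutual information is controlled by the tensorization bound $I(V;X^N)\le N\cdot \max_v \mathrm{KL}(\mathcal{D}_v\|\bar{\mathcal{D}}) = O(N\epsilon^2)$, while the prior entropy is $H(V)=\log|\mathcal{V}|=\Theta(K)$. For the testing error to be bounded below $1-\delta$, Fano forces $N\epsilon^2 \ge \Omega(K)$, which is the desired $\Omega(K/\epsilon^2)$ lower bound.

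For the $\Omega(\log(1/\delta)/\epsilon^2)$ part I would use a simpler two-point argument: take two distributions $\mathcal{D}_0,\mathcal{D}_1$ on $[K]$ (even on $\{1,2\}$ suffices) with $d_{TV}(\mathcal{D}_0,\mathcal{D}_1)>2\epsilon$ but $\mathrm{KL}(\mathcal{D}_0\|\mathcal{D}_1)=O(\epsilon^2)$. Any algorithm achieving accuracy $\epsilon$ with confidence $1-\delta$ distinguishes these two with error at most $\delta$; the data-processing / Pinsker-type bound then shows that distinguishing with error $\delta$ requires $N\ge\Omega(\log(1/\delta)/\epsilon^2)$ samples, since the total variation between the $N$-fold products must approach $1-2\delta$ and $d_{TV}(\mathcal{D}_0^{\otimes N},\mathcal{D}_1^{\otimes N})$ can be bounded via $\mathrm{KL}$. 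Combining the two regimes by a max (and noting $\max\{a,b\}\ge\tfrac12(a+b)$) yields the additive bound $\Omega((K+\log(1/\delta))/\epsilon^2)$.

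The main obstacle I expect is the careful construction and verification of the packing net in the first part: one must simultaneously guarantee large pairwise total-variation separation (so that an $\epsilon$-accurate estimate pins down the index) and small pairwise KL divergence (so that Fano's information term stays $O(N\epsilon^2)$), while keeping the packing size exponential in $K$. Getting the constants right in the perturbation parameter $c\epsilon$ — small enough that each $p_j$ remains a valid nonnegative probability summing to one and that $\mathrm{KL}=O(\epsilon^2)$, yet large enough that the separation exceeds $2\epsilon$ — is the delicate step; the rest follows from routine application of Fano's inequality and tensorization.
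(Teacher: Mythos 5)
The paper never proves this lemma: it is quoted as a known classical fact with a pointer to lecture notes, so your proposal is supplying a proof rather than paralleling one. Your two-pronged strategy (packing plus Fano for the $\Omega(K/\epsilon^2)$ term, a two-point test for the $\Omega(\log(1/\delta)/\epsilon^2)$ term) is the standard route, and it closely mirrors what the paper actually does in its \emph{quantum} lower bound in Section 5, where the uniform distribution is perturbed by $\pm 50\epsilon/K$ according to a sign string $\bm{z}$. However, two of your concrete steps fail as written. First, the packing you offer ``for instance'' --- all sign patterns $v\in\{-1,+1\}^{K/2}$ --- is \emph{not} pairwise separated by $2\epsilon$ in total variation: two patterns differing in a single coordinate yield distributions at TV distance only $2c\epsilon/K$. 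You must either restrict to a Gilbert--Varshamov-type code subset of the hypercube with minimum Hamming distance $\Omega(K)$ (which still has $2^{\Omega(K)}$ elements, so $H(V)=\Theta(K)$ survives), or abandon exact recovery and use an approximate-recovery form of Fano in which the estimator only needs to recover most coordinates of $v$; the latter is exactly the device in the paper's Section 5, where an $\epsilon$-accurate estimate pins down $\bm{z}$ only up to a $1\%$ fraction of wrong bits, which still carries $\Omega(K)$ bits of information.

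Second, for the $\log(1/\delta)$ term, plain Pinsker is too weak. From $1-2\delta\le d_{TV}(\mathcal{D}_0^{\otimes N},\mathcal{D}_1^{\otimes N})\le\sqrt{N\,\mathrm{KL}(\mathcal{D}_0\|\mathcal{D}_1)/2}$ you can only conclude $N=\Omega(1/\epsilon^2)$, with no dependence on $\delta$ at all, because the left-hand side never exceeds $1$ while the KL of the product grows linearly in $N$. To extract the $\log(1/\delta)$ you need an inequality that keeps the product TV bounded away from $1$, e.g.\ Bretagnolle--Huber, $d_{TV}(P,Q)\le 1-\tfrac12 e^{-\mathrm{KL}(P\|Q)}$, or tensorization of the Hellinger/Bhattacharyya affinity; either converts ``error $\le\delta$'' into $e^{-O(N\epsilon^2)}\le O(\delta)$ and hence $N\ge\Omega(\log(1/\delta)/\epsilon^2)$. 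Both repairs are standard, but without them the argument as proposed does not go through.
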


Now, we prove the general ``quantum" case. We first need the following Lemma III.5 from Ref. \cite{Hayden2006Aspects}. 
\begin{lemma}\label{lem:Haydenspace}
(Hayden, Leung, and Winter \cite{Hayden2006Aspects}) Let $S$ and $T$ be subspaces of $\mathbb{C}^{d\times d}$ with dimension $d_1$ and $d_2$. We denote $\mathbb{P}_S$ and $\mathbb{P}_T$ to be projectors on subspaces $S$ and $T$. Consider $\rho_S=\frac{1}{d_1}\mathbb{P}_S$ to be the maximally mixed state projected onto $S$. If we fix $T$ and randomly choose $S$, then
\begin{align*}
\Pr\left[\abs{\Tr(\mathbb{P}_T\rho_S)-\frac{d_2}{d}}\geq\frac{c_0d_2}{d}\right]\leq\exp(-\cdot\frac{c_0^2d_1d_2}{6\ln2}).
\end{align*}
\end{lemma}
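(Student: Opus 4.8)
The plan is to reduce the statement to a single concentration-of-measure estimate on the unitary group, after first pinning down the correct centering. Write the random subspace as $S=U S_0$ for a fixed reference subspace $S_0$ of dimension $d_1$ and a Haar-random unitary $U$, so that $\mathbb{P}_S=U\mathbb{P}_{S_0}U^\dagger$. First I would compute the mean: by invariance of the Haar measure, $\mathbb{E}_U[U\mathbb{P}_{S_0}U^\dagger]$ commutes with every unitary, so by Schur's lemma it equals $(d_1/d)\mathbb{I}$. Hence $\mathbb{E}[\Tr(\mathbb{P}_T\rho_S)]=\frac{1}{d_1}\Tr\!\big(\mathbb{P}_T\cdot\frac{d_1}{d}\mathbb{I}\big)=d_2/d$, confirming that $d_2/d$ is the right centering value, and equivalently $\mathbb{E}[\Tr(\mathbb{P}_T\mathbb{P}_S)]=d_1 d_2/d$.

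The crux is to obtain a sub-Gaussian tail whose exponent scales like $d_1 d_2$, rather than the weaker $d_1^2 d_2/d$ that a naive Lipschitz bound on $U\mapsto\Tr(\mathbb{P}_T U\mathbb{P}_{S_0}U^\dagger)$ would give. The trick I would use is to apply concentration not to this quadratic quantity but to its square root. Define $h(U):=\norm{\mathbb{P}_T U\mathbb{P}_{S_0}}_{\mathrm{HS}}$, so that $h(U)^2=\Tr(\mathbb{P}_T\mathbb{P}_S)=d_1\Tr(\mathbb{P}_T\rho_S)$. Using the reverse triangle inequality together with $\norm{A X B}_{\mathrm{HS}}\le\norm{A}_{\mathrm{op}}\norm{X}_{\mathrm{HS}}\norm{B}_{\mathrm{op}}$ and the fact that projectors have operator norm at most $1$, one checks $|h(U)-h(V)|\le\norm{\mathbb{P}_T(U-V)\mathbb{P}_{S_0}}_{\mathrm{HS}}\le\norm{U-V}_{\mathrm{HS}}$, so $h$ is $1$-Lipschitz with respect to the Frobenius metric.

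I would then invoke the standard measure-concentration inequality (Lévy's lemma) on $\mathrm{U}(d)$: a $1$-Lipschitz function concentrates around its mean at scale $1/\sqrt d$, i.e.\ $\Pr[|h-\mathbb{E}h|\ge s]\le 2\exp(-\Omega(d s^2))$. Since $\mathbb{E}[h^2]=d_1 d_2/d$, the mean $\mathbb{E}[h]$ sits at $\sqrt{d_1 d_2/d}$ up to a correction controlled by the variance (itself $O(1/d)$ from the same concentration), so $h$ is typically of order $\sqrt{d_1 d_2/d}$. Finally I would translate back through $f=h^2$: a deviation of $\Tr(\mathbb{P}_T\rho_S)$ from $d_2/d$ by $c_0 d_2/d$ forces $h$ to deviate from $\sqrt{d_1 d_2/d}$ by roughly $\tfrac{c_0}{2}\sqrt{d_1 d_2/d}$, and substituting $s\asymp c_0\sqrt{d_1 d_2/d}$ into the exponent $d s^2$ yields $\exp(-\Omega(c_0^2 d_1 d_2))$, matching the claimed bound.

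The main obstacle is the constant bookkeeping needed to land exactly on $c_0^2 d_1 d_2/(6\ln 2)$: this requires (i) a clean quantitative form of the $\mathrm{U}(d)$ concentration constant, (ii) controlling the gap between $\mathbb{E}[h]$ and $\sqrt{\mathbb{E}[h^2]}=\sqrt{d_1 d_2/d}$ so the centering sits in the right place, and (iii) handling the mildly nonlinear map $f=h^2$ with care, since the upper and lower deviations of $f$ translate asymmetrically into deviations of $h$. None of these is conceptually difficult, but they are precisely where the stated constant is won or lost.
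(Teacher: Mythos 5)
Note first that the paper contains no proof of this lemma: it is imported verbatim as Lemma III.5 of Hayden, Leung, and Winter \cite{Hayden2006Aspects} and used as a black box in \Cref{sec:lower}, so the only meaningful comparison is with the proof in that reference. Your sketch is essentially that proof: Haar invariance plus Schur's lemma give the centering $d_2/d$; the square-root trick $h(U)=\norm{\mathbb{P}_T U\mathbb{P}_{S_0}}_{\mathrm{HS}}$ yields a $1$-Lipschitz function on $\mathrm{U}(d)$, which is exactly the device that produces an exponent $\propto c_0^2 d_1 d_2$ instead of the weaker exponent a naive Lipschitz bound on the quadratic functional would give; L\'evy's lemma and translating the deviation back through $h^2$ finish the argument. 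The outline is sound, and the loose ends are the ones you flagged yourself, plus two small ones: a two-sided concentration bound inevitably carries a prefactor $2$ that the inequality as stated omits (the statement in the paper, with its stray ``$\cdot$'' in the exponent, is already careless about constants), and the control of the gap between $\mathbb{E}[h]$ and $\sqrt{\mathbb{E}[h^2]}$ that your lower-tail translation requires is negligible only when $c_0 d_1 d_2\gg 1$; the complementary regime, where the advertised bound is weak but not trivially true, has to be dispatched by a separate (easy) argument.
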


Now, we begin to prove \Cref{thm:MainLower}. We set $D:=\lfloor\min\{d,\sqrt{\log_2 M+K}\}\rfloor$ and suppose the unknown state is $D$-dimensional mixed state. We choose some constant $c\in(0,1)$ and set $L=\lfloor c^{D^2-K}\rfloor$. We will have $L$ quantum measurements of $K$ outcomes for $L\leq M$. Notice that the probability for each outcome of the quantum measurement can be regarded as the expectation for a quantum event. There are in total $L\cdot K$ quantum events.

We choose $L\cdot K$ subspaces $\{S_{1,1},...,S_{1,K}\},...,\{S_{L,1},...,S_{L,K}\}$ from $\mathbb{C}^{D\times D}$ for $L$ POVMs independently and Haar-randomly such that $\dim(S_{i,j})=D/K$ for $i\in[L]$ and $j\in[K]$. The $K$ subspaces in each set $\{S_{i,1},...,S_{i,K}\}$ are orthogonal. We denote $\mathbb{P}_{i,j}$ to be the projection to $S_{i,j}$ and $\rho_{i,j}=K\mathbb{P}_{i,j}/D$ to be the maximally mixed state projected onto $S_{i,j}$. As long as we choose a $c$ that is close enough to $1$, we can always find a choice over $S_{i,j}$'s with success probability $1-o(1)$ such that
\begin{equation}\label{eq:Projectassumption}
\abs{\Tr(P_{i,j}\rho_{i',j'})-\frac{1}{K}}\leq\frac{1}{2K}
\end{equation}
for $i\neq i'$ according to Lemma~\ref{lem:Haydenspace}. We fix such a choice over $S_{i,j}$.

Without loss of generality, we assume that $K$ is even. Now, we consider constructing the following states using a classical bit string $\bm{z}=(z_1,...,z_{K/2})$ of $K/2$ bits
\begin{align*}
\rho_i(\bm{z}):=\sum_{j=1}^{K/2}\left[\frac{1-50\epsilon z_j}{K}\rho_{i,2j-1}+\frac{1+50\epsilon z_j}{K}\rho_{i,2j}\right].
\end{align*}
We consider applying measurement $\mathcal{M}_i$ on state $\rho_i(\bm{z})$. The $(2j-1)$-th and the $2j$-th entry for the probability distribution is
\begin{align*}
&\Tr(E_{i,2j-1}\rho_i(\bm{z}))=\frac{1-50\epsilon z_i}{K},\\
&\Tr(E_{i,2j}\rho_i(\bm{z}))=\frac{1+50\epsilon z_i}{K}.\\
\end{align*}
Therefore, the probability distribution can be written as
\begin{align*}
\left(\frac{1-50\epsilon z_1}{K},\frac{1+50\epsilon z_1}{K},...,\frac{1-50\epsilon z_{K/2}}{K},\frac{1+50\epsilon z_{K/2}}{K}\right).
\end{align*}
We consider applying $\mathcal{M}_{i'}$ on state $\rho_i(\bm{z})$ for $i\neq i'$. According to Eq.~\eqref{eq:Projectassumption}, the $j$-th entry of the probability distribution is
\begin{align*}
\Tr(E_{i',j}\rho_i(\bm{z}))&\leq\frac34\cdot\frac{1+50\epsilon}{K}+\frac{1}{4}\cdot\frac{1-50\epsilon}{K}=\frac{1+25\epsilon}{K},\\
\Tr(E_{i',j}\rho_i(\bm{z}))&\geq\frac34\cdot\frac{1-50\epsilon}{K}+\frac{1}{4}\cdot\frac{1+50\epsilon}{K}=\frac{1-25\epsilon}{K}.
\end{align*}

Now, we fix a measurement $\mathcal{M}_i$. If we apply this measurement to two quantum states $\rho_i({\bm{z}_1})$ and $\rho_{i'}(\bm{z}_2)$ for $i\neq i'$. The total variance distance for the two probability distribution is at least $25\epsilon/2$. It follows that, if we can estimate the probability distribution after a $\mathcal{M}_i$ to within total variance distance $\epsilon$, we can immediately estimate $i\in[L]$ for the unknown state $\rho_i(\bm{z})$.

We then consider applying this measurement to two quantum states $\rho_i({\bm{z}_1})$ and $\rho_i(\bm{z}_2)$ for $\bm{z}_1\neq\bm{z}_2$. Since each single difference on one entry in $\bm{z}_1$ and $\bm{z}_2$ will contribute $\frac{100\epsilon}{n}$ to the total variance distance between the two probability distribution, the distance is at least $\epsilon$ if more than $1\%$ of the entries are different. Therefore, we can distinguish between such $\bm{z}_1$ and $\bm{z}_2$ if we can estimate the probability distribution after a $\mathcal{M}_i$ to within total variance distance $\epsilon$. 

Suppose we choose $i$ and $\bm{z}$ uniformly at random, then it contains $\log_2(2^{K/2}M)=\Omega(K+\log_2(M))$ bits of classical information. Suppose we require $N$ copies of $\rho$ to perform a shadow tomography procedure of $K$-outcome POVMs. Let
\begin{align*}
\zeta:=\mathbb{E}_{i\in[L],\bm{z}\in\{0,1\}^{K/2}}\left[\rho_i(\bm{z})^{\otimes N}\right].
\end{align*}
In order to make learning $i$ and $99\%$ of the entries for $\bm{z}$ from $\zeta$ information-theoretically possible, the mutual information $I(\zeta:i,\bm{z})$ must be at least $\Omega(K+\log_2(M))$. As both $i$ and $\bm{z}$ are classical, we have
\begin{align*}
I(\zeta:i,\bm{z})&=S(\zeta)-S(\zeta|i)\\
&=S(\zeta)-S(\rho_i(\bm{z})^{\otimes N})\\
&\leq N(\log_2 D-S(\rho_i(\bm{z})),
\end{align*}
where $S(\cdot)$ is the von Neumann entropy. Now, we calculate the term $S(\rho_i(\bm{z}))$. Let $\lambda_{i,\bm{z},1},...,\lambda_{i,\bm{z},D}$ be the eigenvalues for $\rho_i(\bm{z})$. By applying a unitary transformation that diagonalizes $\rho_i(\bm{z})$ rotating to a basis that contains half of the projectors, we can observe that half the $\lambda_{i,\bm{z},j}$'s are $(1+50\epsilon)/D$ and the other half of the $\lambda_{i,\bm{z},j}$'s are $(1-50\epsilon)/D$. Hence,
\begin{align*}
S(\rho_i(\bm{z}))&=\sum_{j=1}^D\lambda_{i,\bm{z},j}\log_2(\frac{1}{\lambda_{i,\bm{z},j}})\\
&=\frac D2\cdot\frac{1-50\epsilon}{D}\log(\frac{D}{1-50\epsilon})+\frac D2\cdot\frac{1+50\epsilon}{D}\log(\frac{D}{1+50\epsilon})\\
&=\log_2D-\left(1-\frac{1-50\epsilon}{2}\log(\frac{2}{1-50\epsilon})-(1+\frac{1+50\epsilon}{2}\log(\frac{2}{1+50\epsilon})\right)\\
&\geq\log_2 D-O(\epsilon^2).\\
\end{align*}
Therefore, the mutual information can be bounded by
\begin{align*}
I(\zeta:i,\bm{z})=O(N\epsilon^2).
\end{align*}
As learning $i$ and $99\%$ of the entries for $\bm{z}$ requires $\Omega(K+\log_2(M))$ bits of classical information, we conclude that
\begin{align*}
N=\Omega\left(\frac{D^2}{\epsilon^2}\right)=\Omega\left(\frac{\min\{d^2,K+\log(M)\}}{\epsilon^2}\right).
\end{align*}

\section{Open Problems}\label{sec:openq}
This work established the exact dependence on $K$ for shadow tomography of $K$-outcome quantum measurements and proposed the explicit algorithm that learns these distributions with sample complexity optimal in $K$. But there are further problems related to learning distributions over quantum measurement outcomes. Here are a few possible future directions.

\textbf{Tight Bounds on $M$ and $d$. }What is the true sample complexity dependence on $M$ and $d$? For a shadow tomography procedure of $K$-outcome, the best known lower bound is $\Omega(\min\{d^2,K+\log M\}/\epsilon^2)$. In particular, in the case of $K=2$, lower bounds with different constraints have been developed. Ref. \cite{Aaronson2019Gentle} connects differential privacy, gentleness and shadow tomography. It was shown that if a shadow tomography procedure is gentle on product states, then at least $\Omega(\log M\sqrt{\log d}/\epsilon^2)$ copies of unknown quantum data is needed. On the other hand, if a shadow tomography procedure is online, it needs $\Omega(\sqrt{\min\{M,\log d\}})$ copies of unknown quantum data \cite{Ullman2018Limits}. These results, however, are both classical results \cite{Bun2018fIngerprinting,Vadhan2017Complexity}. If we drop these assumptions on the shadow tomography procedure, the lower bound will be independent of $d$ \cite{Aaronson2019Shadow,Huang2020Predicting}. 

(1) A first question is whether we can develop a better upper bound---i.e., find an algorithm with sample complexity that has smaller dependence on $\log M$, $\log d$, and $1/\epsilon$. While the current best bound for shadow tomography is $O(\log^2M\log d/\epsilon^4)$, it has been argued that a classical match for this problem is the task known as \textit{Adaptive Data Analysis} \cite{Dwork2015Preserving}. The task is known to be solved using $\tilde{O}(\log M\sqrt{\log d}/\epsilon^3)$, which is optimal in the dependence on $M$ and $d$.

(2) An alternative question is whether we can find a better lower bound on the number of unknown quantum data. Specifically, when we assume the shadow tomography procedure is gentle or online, the current lower bounds are obtained in a classical context. Is it possible to find a better ``quantum" bound for gentle or online shadow tomography?

\textbf{Shadow Tomography with State-preparation Unitaries. }

(3) In the setting of shadow tomography, we assume that we are given $N$ copies of unknown states $\rho$. We then perform measurements on $\rho^{\otimes N}$ to evaluate the expectation value of the given observables. In some cases, we can obtain the state-preparation unitaries and perform unitary oracles on them. Following this vein, Huggins et al.\ \cite{Huggins2021Nearly} proposed an optimal algorithm of query complexity $O(\sqrt{M}/\epsilon)$ for the state-preparation oracle to estimate $M$ observables \textit{independently} within additive error $\epsilon$. van Apeldoorn et al.\ \cite{Van2022Quantum} proved that state tomography problem can be solved using $O(dr/\epsilon)$ queries. It would be interesting to know the upper and lower bound for quantum shadow tomography under this stronger input assumption \footnote{At least we can use the technique of quantum mean estimation on multivariate random variables \cite{Hamoudi2021Quantum,Cornelissen2022Near} to reduce the dependence on $\epsilon$ from $\epsilon^{-4}$ to $\epsilon^{-3}$.}.

\textbf{Shadow Tomography with Restricted Kinds of Measurements. }

(4) From the perspective of experimental feasibility, our algorithm requires joint measurements, which measure the state $\rho^{\otimes N}$ simultaneously, and weak measurements (or the so-called \textit{non-demolition measurements}), which carefully maintain the state through a long sequence of quantum measurements. Therefore, it would be interesting if we can find a shadow tomography of $K$-outcome POVMs using incoherent measurements---i.e., measuring each quantum state separately. It would also be meaningful if we can find a shadow tomography procedure that measures each quantum state at most once.

(5) In this work, we prove the tight bound that $\Theta(K)$ copies of $\rho$ are required to learn the distribution after quantum measurements. However, $K$ can be exponentially large in some cases. For example, if we perform a measurement on the computational basis, $K=d=2^n$. For a fixed dimension $d$, can we find a class of quantum measurements that have a (sub)polynomial number of outcomes while only requiring logarithmic sample complexity on $d$?

\textbf{Computational Complexity. }

(6) Our shadow tomography procedure takes poly$(M,d,1/\epsilon,K)$ time complexity to compute the estimation for the distributions. Although it was proved that if there exists a ``hyperefficient" shadow tomography procedure---i.e., the sample complexity and sample complexity are both poly $(\log M,\log d,1/\epsilon,K)$, then any quantum advice can be simulated by classical advice and $\mathsf{BQP/qpoly}=\mathsf{PostBQP/poly}$ \cite{Aaronson2019Shadow}. It would be interesting if we can find a shadow tomography procedure that is hyperefficient when we add some constraints on $\mathcal{M}_i$ and $\rho$. Also, can we find a trade-off relation between the sample and the time complexity?

\section{Acknowledgments}
We thank William Kretschmer, Dong-Ling Deng, and Weikang Li for helpful discussions and comments.

\bibliographystyle{plain}
\bibliography{KPOVMShadow}
\newpage
\appendix  

\end{document}